\newcounter{subcopyrightbox@save}
\setlist{nolistsep}
\newcommand{\argmax}{\operatornamewithlimits{argmax}}
\newcommand{\argmin}{\operatornamewithlimits{argmin}}
\newcommand{\myparatight}[1]{\smallskip\noindent{\bf {#1}:}~}
\newenvironment{customLemma}[1]
  {\count@\c@lemma
   \global\c@lemma#1 %
    \global\advance\c@lemma\m@ne
   \lemma}
  {\endlemma
   \global\c@lemma\count@}
  \providecommand\BibTeX{{%
    \normalfont B\kern-0.5em{\scshape i\kern-0.25em b}\kern-0.8em\TeX}}}
\begin{document}

\fancyhead{}

\title{Certified Robustness of Graph Neural Networks against Adversarial Structural Perturbation}

\author{Binghui Wang, Jinyuan Jia, Xiaoyu Cao, and Neil Zhenqiang Gong}
\affiliation{%
  \institution{ECE Department, Duke University, Durham, NC, USA, 27708}
}
\email{{binghui.wang,jinyuan.jia,xiaoyu.cao,neil.gong}@duke.edu}

\begin{abstract}
Graph neural networks (GNNs) have recently gained much attention for node and graph classification tasks on graph-structured data. 
However, multiple recent works showed that an attacker can easily make GNNs predict incorrectly via perturbing the graph structure, i.e., adding or deleting edges in the graph. We aim to defend against such attacks via developing certifiably robust GNNs. Specifically, we prove the certified robustness guarantee of any GNN for both node and graph classifications against structural perturbation. Moreover, we show that our certified robustness guarantee is \emph{tight}. Our results are based on a recently proposed technique called \emph{randomized smoothing}, which we extend to graph data. We also empirically evaluate our method for both node and graph classifications on multiple GNNs and multiple benchmark datasets. For instance, on the Cora dataset, Graph Convolutional Network with our randomized smoothing can achieve a certified accuracy of 0.49 when the attacker can arbitrarily add/delete at most 15 edges in the graph.
\end{abstract}

\begin{CCSXML}
<ccs2012>
<concept>
<concept_id>10002978</concept_id>
<concept_desc>Security and privacy</concept_desc>
<concept_significance>500</concept_significance>
</concept>
<concept>
<concept_id>10010147.10010257</concept_id>
<concept_desc>Computing methodologies~Machine learning</concept_desc>
<concept_significance>500</concept_significance>
</concept>
</ccs2012>
\end{CCSXML}

\ccsdesc[500]{Security and privacy~}
\ccsdesc[500]{Computing methodologies~Machine learning}

\keywords{Adversarial machine learning, graph neural network, certified robustness, structural perturbation}

\maketitle

\section{Introduction}
\label{intro}

Graphs are a powerful tool to represent data from diverse domains such as social networks, biology, finance, security, etc.. 
\emph{Node classification} and \emph{graph classification} are two basic tasks on graphs. Specifically, given a graph, node classification aims to classify nodes in the graph in a semi-supervised fashion, while graph classification aims to assign a label to the entire graph instead of individual nodes. Node and graph classifications have many applications, including but not limited to, profiling users in social networks~\cite{zheleva2009join,mislove2010you,gong2016you,jia2017attriinfer}, classifying proteins~\cite{hamilton2017inductive,velivckovic2018graph}, and fraud detection~\cite{pandit2007netprobe,tamersoy2014guilt,wang2017gang,weber2019anti,gong2014sybilbelief,wang2017sybilscar,wang2018structure,wang2019graph}. 
Various methods such as label propagation~\cite{zhu2003semi},  belief propagation~\cite{pearl1988probabilistic}, iterative classification~\cite{sen2008collective}, and graph neural networks (GNNs)~\cite{kipf2017semi,hamilton2017inductive,gilmer2017neural,velivckovic2018graph,xu2018representation} have been proposed for node/graph classifications. Among them, GNNs have attracted much attention recently because of their expressiveness and superior performance.

However, several recent works~\cite{zugner2018adversarial,dai2018adversarial,zugner2019adversarial,bojchevski2019adversarial,wang2019attacking} have demonstrated that an attacker can easily fool GNNs to make incorrect predictions via perturbing 1) the node features and/or 2) the structure of the graph (i.e., adding and/or deleting edges in the graph). Therefore, it is of great importance to develop certifiably robust GNNs against such attacks. 
A few recent works~\cite{Zugner2019Certifiable,bojchevski2020efficient,jin2020certified,bojchevski2019certifiable,zugner2020certifiable}
study certified robustness of GNNs against node feature or structural perturbation.
However, these methods only focus on certifying robustness for a \emph{specific} GNN~\cite{Zugner2019Certifiable,bojchevski2019certifiable,jin2020certified,zugner2020certifiable} or/and did not formally prove \emph{tight} robustness guarantees~\cite{bojchevski2020efficient} in general. 

We aim to bridge this gap in this work. Specifically, we aim to provide \emph{tight} certified robustness guarantees of \emph{any} GNN classifier against structural perturbations for both node and graph classification. Towards this goal, we leverage a recently developed  technique called \emph{randomized smoothing}~\cite{cao2017mitigating,liu2018towards,lecuyer2018certified,li2018second,cohen2019certified,jia2020certified}, which can turn any \emph{base classifier} (e.g., a GNN classifier in our problem) to a robust one via adding random noise to a testing example (i.e., a graph in our problem). 
A classifier is certifiably robust if it provably predicts the same label  when the attacker adds/deletes at most $K$ edges in the graph, where we call $K$ \emph{certified perturbation size}.

Graph structures are \emph{binary} data, i.e., a pair of nodes can be either connected or unconnected. 
Therefore, we develop randomized smoothing for binary data and leverage it to obtain certified robustness of GNNs against structural perturbation. 
First, we theoretically derive a certified perturbation size for any GNN classifier with randomized smoothing via addressing several  
challenges. For instance, we divide the graph structure space into regions in a novel way such that we can apply the Neyman-Pearson Lemma~\cite{neyman1933ix} to certify robustness. 
We also prove that our derived certified perturbation size is \emph{tight} if no assumptions on the GNN classifier are made. 

Second, we design a method to compute our certified perturbation size in practice.
It is challenging to compute our certified perturbation size as it involves estimating probability bounds \emph{simultaneously} and solving an optimization problem.  
To address the challenge, we first adopt the simultaneous confidence interval estimation method~\cite{jia2020certified} to estimate the probability bounds with probabilistic guarantees.  
Then, we design an algorithm to solve the optimization problem to obtain our certified perturbation size with the estimated probability bounds.

We also empirically evaluate our method. Specifically, for node classification, we consider  Graph Convolutional Network (GCN)~\cite{kipf2017semi} and Graph Attention Network (GAT)~\cite{velivckovic2018graph} on  several benchmark datasets including Cora, Citeseer, and Pubmed~\cite{sen2008collective}. For graph classification, we consider Graph Isomorphism Network (GIN)~\cite{xu2019powerful} on benchmark datasets including MUTAG, PROTEINS, and IMDB~\cite{yanardag2015deep}.  For instance, on the Cora dataset, GCN with our randomized smoothing can achieve certified accuracies of 0.55, 0.50, and 0.49 when the attacker arbitrarily adds/deletes at most 5, 10, and 15 edges, respectively. On the MUTAG dataset, GIN with our randomized smoothing can achieve certified accuracies of 0.45, 0.45, and 0.40 when the attacker arbitrarily adds/deletes at most 5, 10, and 15 edges, respectively. 

Our major contributions can be 
summarized as follows:
\begin{itemize}
\item We prove the certified robustness guarantee of any GNN against structural perturbation. Moreover, we show that our certified robustness guarantee is tight.
\item Our certified perturbation size is the solution to an optimization problem and we design an algorithm to solve the 
problem.
\item We empirically evaluate our method for both node and graph classifications on multiple benchmark datasets.
\end{itemize}

\section{Background}
\subsection{Node Classification vs. Graph Classification} 
We consider GNNs for both \emph{node classification}~\cite{kipf2017semi,hamilton2017inductive,velivckovic2018graph} and \emph{graph classification}~\cite{hamilton2017inductive,xu2019powerful}. Suppose we are given an undirected graph $G$ with node features. 
\begin{itemize}[leftmargin=*]
\item {\bf Node classification.} A node classifier predicts labels for nodes in the graph in a semi-supervised fashion. Specifically, a subset of nodes in $G$ are already labeled, which are called \emph{training nodes} and denoted as $V_T$. A node classifier $f_n$ takes the graph $G$ and the training nodes $V_T$ as an input and predicts the label for 
remaining nodes, i.e., $f_n(G, V_T, u)$ is the predicted label for 
a 
node $u$. 
\item  {\bf Graph classification.} A graph classifier aims to predict a label for the entire graph instead of individual nodes, i.e., $f_g(G)$ is the predicted label for the graph $G$. Such a graph classifier can be trained using a set of graphs with ground truth labels. 
\end{itemize}

\subsection{Adversarial Structural Perturbation} 
An attacker aims to fool a node classifier or graph classifier to make predictions as the attacker desires via perturbing the graph structure, i.e., deleting some edges and/or adding some edges in the graph~\cite{zugner2018adversarial,dai2018adversarial,zugner2019adversarial,bojchevski2019adversarial,wang2019attacking}. Since our work focuses on structural perturbation,  we treat the node feature matrix and the training nodes as constants. Moreover, we simply write a node classifier $f_n(G, V_T, u)$  or a graph classifier $f_g(G)$ as $f(\mathbf{s})$,  where the binary vector $\mathbf{s}$ represents the graph structure. Note that a node classifier should also take a node $u$ as input and predict its label. However, we omit the explicit dependency on a node $u$ for simplicity. We call $\mathbf{s}$ \emph{structure vector}.  For instance, $\mathbf{s}$ can be the concatenation of the upper triangular part of the adjacency matrix of the graph (excluding the diagonals) when the attacker can modify the connection status of any pair of nodes, i.e., $\mathbf{s}$ includes the connection status for each pair of nodes in the graph.  When the attacker can only modify the connection status between $u$ and each remaining node, $\mathbf{s}$ can also be the $u$th row of the adjacency matrix of the graph (excluding the self-loop $(u,u)$th entry). We assume the structure vector  $\mathbf{s}$ has $n$ entries. As we will see, such simplification makes it easier to present our certified robustness against structural perturbation.

We denote by vector $\delta\in \{0,1\}^n$ the attacker's perturbation to the graph structure. Specifically,  $\delta_i=1$ if and only if the attacker changes the $i$th entry in the structure vector  $\mathbf{s}$, i.e., the attacker changes the connection status of the corresponding pair of nodes. Moreover, $\mathbf{s} \oplus \delta$ is the perturbed structure vector, which represents the perturbed graph structure, where $\oplus$ is the XOR operator between two binary variables. We use $||\delta||_0$ to measure the magnitude of the adversarial perturbation as it has semantic meanings. Specifically, $||\delta||_0$ is the number of node pairs whose connection statuses are modified by the attacker.

\section{Certified Robustness}
\label{certify}

\subsection{Randomized Smoothing with Binary Noise}
We first define a noise distribution in the discrete structure vector space $\{0,1\}^n$. Then, we define a \emph{smoothed classifier} based on the noise distribution and a  node/graph classifier (called \emph{base classifier}). Specifically, we consider the noise vector $\epsilon$ has the following probability distribution in the discrete space $\{0,1\}^n$:
\begin{align}
\label{discretenoisedistribution}
\text{Pr}(\epsilon_i=0) =\beta,\ \text{Pr}(\epsilon_i=1) =1-\beta,
\end{align}
where $i=1,2,\cdots,n$. 
When we add a random noise vector $\epsilon$ to the structure vector $\mathbf{s}$, the $i$th entry of $\mathbf{s}$ is preserved with probability $\beta$ and changed with probability $1-\beta$. In other words, our random noise means that, for each pair of nodes in the graph, we keep its connection status with probability $\beta$ and change its connection status with probability $1-\beta$.  Based on the noise distribution and a base node/graph classifier $f(\mathbf{s})$, we define a smoothed classifier $g(\mathbf{s})$ as follows:
\begin{align}
\label{smoothedclassifier}
g(\mathbf{s}) = \argmax_{c\in \mathcal{C}}   \text{Pr}(f(\mathbf{s}\oplus \epsilon)=c),
\end{align}
where $\mathcal{C}$ is the set of labels, $\oplus$ is the XOR operator between two binary variables, $\text{Pr}(f(\mathbf{s}\oplus \epsilon)=c)$ is the probability that the base classifier $f$ predicts label $c$ when we add random noise $\epsilon$ to the structure vector $\mathbf{s}$, and $g(\mathbf{s})$ is the label predicted for $\mathbf{s}$ by the smoothed classifier. 
Moreover, we note that $g(\mathbf{s} \oplus \delta)$ is the label predicted for the perturbed structure vector $\mathbf{s} \oplus \delta$.
Existing randomized smoothing methods~\cite{cao2017mitigating,liu2018towards,lecuyer2018certified,li2018second,cohen2019certified,jia2020certified} add random continuous noise (e.g., Gaussian noise, Laplacian noise) to a testing example (i.e., the structure vector $\mathbf{s}$ in our case). However, such continuous noise is not meaningful for the binary structure vector.

Our goal is to show that a label is provably the predicted label by the smoothed classifier $g$ for the perturbed structure vector $\mathbf{s} \oplus \delta$ when the $\ell_0$-norm of the adversarial perturbation $\delta$, i.e., $\|\delta\|_0$, is bounded. Next, we theoretically derive the certified perturbation size $K$, prove the tightness of the certified perturbation size, and discuss how to compute the certified perturbation size in practice. 

\subsection{Theoretical Certification} 

\subsubsection{Overview}

Let $X = \mathbf{s} \oplus \epsilon$ and $Y=\mathbf{s} \oplus \delta \oplus \epsilon$ be two random variables, 
where $\epsilon$ is the random binary noise drawn from the distribution defined in Equation~\ref{discretenoisedistribution}. 
$X$ and $Y$ represent random structure vectors obtained by adding random binary noise $\epsilon$ to the structure vector $\mathbf{s}$ and its perturbed version $\mathbf{s} \oplus \delta$, respectively.

Suppose, when taking $X$ as an input, the base GNN classifier $f$ correctly predicts the label $c_A$ with the largest probability.
Then, our key idea is to guarantee that, when taking $Y$ as an input, $f$ still predicts $c_A$ with the largest probability. 
Moreover, we denote $c_B$ as the predicted label by $f$ with the second largest probability.
 
Then, our goal is to find the maximum perturbation size such that the following inequality holds:
\begin{align}
\label{bound_both_sides}
\text{Pr}(f(Y)=c_A)> \text{Pr}(f(Y)=c_B). 
\end{align}
Note that it is challenging to compute the probabilities $\text{Pr}(f(Y) = c_A)$ and $\text{Pr}(f(Y)=c_B)$ exactly because $f$ is highly nonlinear in practice.
To address the challenge, we first derive a lower bound of $\text{Pr}(f(Y) = c_A)$ and an upper bound of $\text{Pr}(f(Y)=c_B)$. 
Then, we require that the lower bound of $\text{Pr}(f(Y) = c_A)$ is larger than the upper bound of $\text{Pr}(f(Y)=c_B)$.
Specifically, we derive the lower bound and upper bound by constructing certain regions in the graph structure space  $\{0,1\}^n$ such that the probabilities $Y$ is in these regions can be efficiently computed for any $||\delta||_0=k$. 
Then, we iteratively search the maximum $k$ under the condition that the lower bound is larger than the upper bound. 
Finally, we treat the maximum $k$ as the certified perturbation size $K$.

\subsubsection{Deriving the lower and upper bounds}
Our idea is to divide the graph structure space  $\{0,1\}^n$ into regions in a novel way such that we can apply the Neyman-Pearson Lemma~\cite{neyman1933ix} to derive the  lower bound and upper bound. 
First, for any data point $ \mathbf{z} \in \{0,1\}^n$, we have the density ratio $\frac{\text{Pr}(X = \mathbf{z})}{\text{Pr}(Y = \mathbf{z})}=\big(\frac{\beta}{1-\beta}\big)^{w-v}$ based on the noise distribution defined in Equation~\ref{discretenoisedistribution}, where $ w=\| \mathbf{s} - \mathbf{z} \|_0$ and $v = \| \mathbf{s} \oplus \delta - \mathbf{z} \|_0$  (please refer to Section~\ref{compute_prob} in for details). 
Therefore, we have the density ratio $\frac{\text{Pr}(X = \mathbf{z})}{\text{Pr}(Y = \mathbf{z})}=\big(\frac{\beta}{1-\beta}\big)^{m}$ for any $ \mathbf{z} \in \{0,1\}^n$, where $m=-n, -n+1, \cdots, n-1, n$. 
Furthermore, we define a region $\mathcal{R}(m)$ as the set of data points whose density ratios are $\big(\frac{\beta}{1-\beta}\big)^{m}$, i.e.,  $\mathcal{R}(m)= \{\mathbf{z} \in \{0,1\}^n: \frac{\text{Pr}(X = \mathbf{z})}{\text{Pr}(Y = \mathbf{z})} = \big(\frac{\beta}{1-\beta}\big)^{m}\}$, and we denote by $r(m)$ the corresponding density ratio, i.e.,  $r(m) = \big(\frac{\beta}{1-\beta}\big)^{m}$. 
Moreover, we rank the $2n+1$ regions  in a descending order with respect to their density ratios, and denote the ranked regions as $\mathcal{R}_1, \mathcal{R}_2, \cdots, \mathcal{R}_{2n+1}$.

Suppose we have
a lower bound of the largest label probability $\text{Pr}(f(X)=c_A)$ for $c_A$ and denote it as $\underline{p_A}$, 
and an upper bound of the remaining label probability $\text{Pr}(f(X)=c)$ for $c \neq c_A$ and denote it as $\overline{p_B}$. 
Assuming there exist $c_A$ and $\underline{p_A}$, $\overline{p_B} \in [0,1]$ such that  
\begin{align}
\label{main_theorem_condition_label}
\text{Pr}(f(X)=c_A)\geq \underline{p_A}\geq \overline{p_B} \geq \max_{c\neq c_A}\text{Pr}(f(X)=c).
\end{align}
Next, we construct two regions $\mathcal{A}$ and $\mathcal{B}$ such that $\text{Pr}(X \in \mathcal{A}) = \underline{p_A}$ and $\text{Pr}(X \in \mathcal{B}) = \overline{p_B}$, respectively. 
Specifically, we gradually add the regions $\mathcal{R}_1, \mathcal{R}_2, \cdots, \mathcal{R}_{2n+1}$ to $\mathcal{A}$ until  $\text{Pr}(X \in \mathcal{A}) = \underline{p_A}$. 
Moreover, we gradually add the regions $\mathcal{R}_{2n+1}, \mathcal{R}_{2n}, \cdots, \mathcal{R}_{1}$ to $\mathcal{B}$ until $\text{Pr}(X \in \mathcal{B}) = \overline{p_B}$.
We construct the regions $\mathcal{A}$ and $\mathcal{B}$ in this way such that we can apply the Neyman-Pearson Lemma~\cite{neyman1933ix} for them.
Formally, we define the regions $\mathcal{A}$ and $\mathcal{B}$ as follows:
\begin{align}
& \mathcal{A} = \bigcup_{j=1}^{a^{\star}-1} \mathcal{R}_j \cup \underline{\mathcal{R}_{a^\star}} \label{region_A} \\
& \mathcal{B} = \bigcup_{j=b^{\star}+1}^{2n+1} \mathcal{R}_j \cup \underline{\mathcal{R}_{b^\star}}, \label{region_B}
\end{align}
where 
\begin{align*}
& a^{\star} = \argmin_{a \in \{1, 2, \cdots, 2n+1 \}} a, \ s.t.\ \sum_{j=1}^{a} \textrm{Pr}(X \in \mathcal{R}_j) \geq \underline{p_A}, \\
& b^{\star} = \argmax_{b \in \{1, 2, \cdots, 2n+1 \}} b,\ s.t.\ \sum_{j=b}^{2n+1} \textrm{Pr}(X \in \mathcal{R}_j) \geq \overline{p_B}.
\end{align*}
$\underline{\mathcal{R}_{a^{\star}}}$ is any subregion of $\mathcal{R}_{a^{\star}}$ such that
$\text{Pr}(X\in \underline{\mathcal{R}_{a^{\star}}})= \underline{p_A} - \sum_{j=1}^{a^{*}-1} \textrm{Pr}(X \in \mathcal{R}_j)$, and  
$\underline{\mathcal{R}_{b^{\star}}}$ is any subregion of  $\mathcal{R}_{b^{\star}}$ such that
$\text{Pr}(X\in \underline{\mathcal{R}_{b^{\star}}})= \overline{p_B} - \sum_{j={b^*+1}}^{2n+1} \textrm{Pr}(X \in \mathcal{R}_j)$.

Finally, based on the Neyman-Pearson Lemma, we can derive a lower bound of $\text{Pr}(f(Y)=c_A)$ and an upper bound of $\text{Pr}(f(Y)=c_B)$. 
Formally, we have the following lemma: 
\begin{restatable}[]{lem}{certifyiedradiuslemma} 
\label{lemmabounds} 
We have the following bounds:
\begin{align}
\text{Pr}(f(Y)=c_A) \geq \text{Pr}(Y\in\mathcal{A}) \label{lower}, \\ 
\text{Pr}(f(Y)=c_B) \leq \text{Pr}(Y\in\mathcal{B}) \label{upper}.
\end{align}
\end{restatable}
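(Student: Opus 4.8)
The plan is to establish both bounds with a single density-ratio (Neyman--Pearson) argument, treating the lower bound on $\text{Pr}(f(Y)=c_A)$ and the upper bound on $\text{Pr}(f(Y)=c_B)$ as mirror images of each other. First I would introduce the prediction sets $S_A=\{\mathbf{z}\in\{0,1\}^n : f(\mathbf{z})=c_A\}$ and $S_B=\{\mathbf{z}\in\{0,1\}^n : f(\mathbf{z})=c_B\}$, so that $\text{Pr}(f(Y)=c_A)=\text{Pr}(Y\in S_A)$ and $\text{Pr}(f(Y)=c_B)=\text{Pr}(Y\in S_B)$, with the analogous identities for $X$. The assumption in Equation~\ref{main_theorem_condition_label} together with the defining constructions of $\mathcal{A}$ and $\mathcal{B}$ then gives the $X$-probability comparisons $\text{Pr}(X\in S_A)\geq\underline{p_A}=\text{Pr}(X\in\mathcal{A})$ and $\text{Pr}(X\in S_B)\leq\overline{p_B}=\text{Pr}(X\in\mathcal{B})$. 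The whole task is to transfer these comparisons from $X$ to $Y$.

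The central observation is that the density ratio $\frac{\text{Pr}(X=\mathbf{z})}{\text{Pr}(Y=\mathbf{z})}=r(m)$ is constant on each region $\mathcal{R}(m)$, and since $\mathcal{A}$ is assembled by adding regions in decreasing order of $r(\cdot)$, every point of $\mathcal{A}$ has density ratio at least the threshold $t$ attained on the boundary region $\mathcal{R}_{a^\star}$, while every point outside $\mathcal{A}$ has density ratio at most $t$; here $t>0$ because $\beta\in(0,1)$. For the lower bound I would decompose the difference as $\text{Pr}(Y\in S_A)-\text{Pr}(Y\in\mathcal{A})=\text{Pr}(Y\in S_A\setminus\mathcal{A})-\text{Pr}(Y\in\mathcal{A}\setminus S_A)$ and bound each term pointwise: on $S_A\setminus\mathcal{A}$ the density ratio is at most $t$, so $\text{Pr}(Y=\mathbf{z})\geq t^{-1}\text{Pr}(X=\mathbf{z})$, whereas on $\mathcal{A}\setminus S_A$ the density ratio is at least $t$, so $\text{Pr}(Y=\mathbf{z})\leq t^{-1}\text{Pr}(X=\mathbf{z})$. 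Summing these inequalities yields $\text{Pr}(Y\in S_A)-\text{Pr}(Y\in\mathcal{A})\geq t^{-1}\bigl(\text{Pr}(X\in S_A)-\text{Pr}(X\in\mathcal{A})\bigr)\geq 0$, where the last step uses $\text{Pr}(X\in S_A)\geq\text{Pr}(X\in\mathcal{A})$ and $t>0$. This is Equation~\ref{lower}.

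The upper bound follows by reversing every inequality. Now $\mathcal{B}$ is built from the regions of smallest density ratio, so every point of $\mathcal{B}$ has density ratio at most the boundary threshold $t'$ and every point outside $\mathcal{B}$ has density ratio at least $t'$. The same decomposition of $\text{Pr}(Y\in S_B)-\text{Pr}(Y\in\mathcal{B})$, with the pointwise bounds flipped, gives $\text{Pr}(Y\in S_B)-\text{Pr}(Y\in\mathcal{B})\leq (t')^{-1}\bigl(\text{Pr}(X\in S_B)-\text{Pr}(X\in\mathcal{B})\bigr)\leq 0$ since $\text{Pr}(X\in S_B)\leq\text{Pr}(X\in\mathcal{B})$, which is exactly Equation~\ref{upper}.

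I expect the main obstacle to be the careful treatment of the boundary regions $\underline{\mathcal{R}_{a^\star}}$ and $\underline{\mathcal{R}_{b^\star}}$, which are only partially included in $\mathcal{A}$ and $\mathcal{B}$. The point is that all elements of a single region $\mathcal{R}(m)$ share one density ratio, so even when $\mathcal{R}_{a^\star}$ is split, its points take density ratio exactly $t$ and therefore simultaneously satisfy ``$\geq t$'' (as members of $\mathcal{A}$) and ``$\leq t$'' (as non-members); the dichotomy ``density ratio at least $t$ inside, at most $t$ outside'' thus survives the split. Verifying this carefully, rather than the threshold inequality itself, is where the argument needs attention. A minor accompanying check is that $\text{Pr}(X\in\mathcal{A})=\underline{p_A}$ and $\text{Pr}(X\in\mathcal{B})=\overline{p_B}$ hold with equality, which is guaranteed by the choices of $a^\star$, $b^\star$ and the prescribed masses of the subregions $\underline{\mathcal{R}_{a^\star}}$ and $\underline{\mathcal{R}_{b^\star}}$ in Equations~\ref{region_A} and~\ref{region_B}.
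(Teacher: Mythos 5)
Your proof is correct and follows essentially the same route as the paper: the paper factors the argument through a separately stated Neyman--Pearson lemma for binary random variables, whose proof is exactly your decomposition $\text{Pr}(Y\in S)-\text{Pr}(Y\in\mathcal{A})=\text{Pr}(Y\in S\setminus\mathcal{A})-\text{Pr}(Y\in\mathcal{A}\setminus S)$ combined with the pointwise density-ratio bounds at the threshold $r_{a^\star}$ (and symmetrically for $\mathcal{B}$). You simply inline that lemma for a deterministic base classifier, and your handling of the split boundary regions $\underline{\mathcal{R}_{a^\star}}$, $\underline{\mathcal{R}_{b^\star}}$ matches the paper's treatment.
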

\begin{proof}
See Section~\ref{proof_lemmabounds}. 
\end{proof}

\subsubsection{Deriving the certified perturbation size}
Given Lemma~\ref{lemmabounds}, we can derive the certified perturbation size $K$ as the maximum $k$ such that the following inequality holds for $\forall ||\delta||_0=k$:
\begin{align}
\label{suff_condi}
\text{Pr}(Y \in \mathcal{A}) > \text{Pr}(Y\in\mathcal{B}).
\end{align}
Formally, we have the following theorem:
\begin{restatable}[Certified Perturbation Size]{thm}{certyfiedperturbationsize}
\label{CertifiedPerturbationSize}
Let $f$ be any base node/graph classifier. The random noise vector $\epsilon$  is defined in Equation~\ref{discretenoisedistribution}. The smoothed classifier $g$ is defined in Equation~\ref{smoothedclassifier}. Given a structure vector $\mathbf{s} \in \{0, 1\}^n$, suppose there exist $c_A$ and $\underline{p_A}, \overline{p_B} \in [0,1]$ that satisfy Equation~\ref{main_theorem_condition_label}. 
Then, we have
\begin{align}
g(\mathbf{s} \oplus \delta) = c_A, \forall   ||\delta||_0 \leq K, 
\end{align}
where the certified perturbation size $K$ is the solution to the following optimization problem: 
\begin{align}
\label{problemK}
& K= \argmax k,   \\
& \textrm{s.t. }  \text{Pr}(Y \in \mathcal{A}) > \text{Pr}(Y\in\mathcal{B}), \, \forall ||\delta||_0 = k \label{inequalityconstraint}.
\end{align}
\end{restatable}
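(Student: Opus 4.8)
The plan is to unfold the definition of the smoothed classifier, reduce the claim to a strict-domination inequality between label probabilities under $Y$, discharge that inequality using Lemma~\ref{lemmabounds} together with the feasibility constraint in Equation~\ref{inequalityconstraint}, and finally argue that feasibility is downward closed in $k$ so that certifying size $K$ certifies the whole ball $\|\delta\|_0\le K$.

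First I would rewrite $g(\mathbf{s}\oplus\delta)=\argmax_{c\in\mathcal{C}}\text{Pr}(f(\mathbf{s}\oplus\delta\oplus\epsilon)=c)=\argmax_{c\in\mathcal{C}}\text{Pr}(f(Y)=c)$, using Equation~\ref{smoothedclassifier} and the definition $Y=\mathbf{s}\oplus\delta\oplus\epsilon$. Hence to prove $g(\mathbf{s}\oplus\delta)=c_A$ it suffices to show that $c_A$ is the strict maximizer, i.e. $\text{Pr}(f(Y)=c_A)>\text{Pr}(f(Y)=c)$ for every $c\neq c_A$.

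Next I would apply Lemma~\ref{lemmabounds}. Its lower bound gives $\text{Pr}(f(Y)=c_A)\ge\text{Pr}(Y\in\mathcal{A})$. For the upper bound, the point I would stress is that $\overline{p_B}$ dominates $\text{Pr}(f(X)=c)$ for \emph{every} $c\neq c_A$ and not merely the designated runner-up $c_B$, since $\text{Pr}(f(X)=c)\le\max_{c'\neq c_A}\text{Pr}(f(X)=c')\le\overline{p_B}$ by Equation~\ref{main_theorem_condition_label}; because the region $\mathcal{B}$ in Equation~\ref{region_B} is built only from the mass $\overline{p_B}$, the same Neyman--Pearson argument yields $\text{Pr}(f(Y)=c)\le\text{Pr}(Y\in\mathcal{B})$ uniformly over $c\neq c_A$, in particular for the maximizer of $\text{Pr}(f(Y)=\cdot)$ among $c\neq c_A$. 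Chaining with the constraint then gives $\text{Pr}(f(Y)=c_A)\ge\text{Pr}(Y\in\mathcal{A})>\text{Pr}(Y\in\mathcal{B})\ge\text{Pr}(f(Y)=c)$ for all $c\neq c_A$, which is exactly the required strict domination; thus any single $\delta$ satisfying the constraint is certified.

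It remains to pass from a single $\delta$ to the whole ball $\|\delta\|_0\le K$. I would first record a symmetry fact: since the noise coordinates in Equation~\ref{discretenoisedistribution} are i.i.d., permuting coordinates leaves the construction invariant, so the masses $\text{Pr}(Y\in\mathcal{A})$ and $\text{Pr}(Y\in\mathcal{B})$ depend on $\delta$ only through $\|\delta\|_0$; this makes the ``$\forall\,\|\delta\|_0=k$'' constraint well defined and reducible to one representative per size. Then I would show the feasible set of $k$ is downward closed, i.e. feasibility at size $k$ forces feasibility at size $k-1$, so that $K=\argmax k$ being feasible implies every $k\le K$ is feasible and hence every $\|\delta\|_0\le K$ is certified by the per-$\delta$ argument. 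I expect this last step to be the main obstacle: one must prove that enlarging the support of $\delta$ by one coordinate only separates $X$ and $Y$ further, making $\text{Pr}(Y\in\mathcal{A})$ nonincreasing and $\text{Pr}(Y\in\mathcal{B})$ nondecreasing in $k$. This is intuitively clear but delicate, because the regions $\mathcal{R}(m)$ and the cutoff indices $a^\star,b^\star$ of Equations~\ref{region_A}--\ref{region_B} themselves shift with $\delta$, so one cannot simply compare the masses of fixed sets.
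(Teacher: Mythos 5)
Your route is the same as the paper's: unfold $g(\mathbf{s}\oplus\delta)$ via Equation~\ref{smoothedclassifier}, invoke Lemma~\ref{lemmabounds} to reduce the claim to $\text{Pr}(Y\in\mathcal{A})>\text{Pr}(Y\in\mathcal{B})$, and read $K$ off the optimization problem; the paper's own proof is exactly this reduction stated in three lines. Two of your refinements go beyond what the paper writes and are worth keeping. First, you correctly insist that the upper bound must hold uniformly over all $c\neq c_A$, not only for the designated runner-up $c_B$: under $Y$ the label with the second-largest probability need not be $c_B$ anymore, and the literal statement of Lemma~\ref{lemmabounds} covers only $c_B$. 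Your observation that Equation~\ref{main_theorem_condition_label} bounds every $\text{Pr}(f(X)=c)$ with $c\neq c_A$ by $\overline{p_B}$, so the identical Neyman--Pearson argument (applied to $h_c=\mathbb{I}(f(\cdot)=c)$ against the same region $\mathcal{B}$) yields $\text{Pr}(f(Y)=c)\le\text{Pr}(Y\in\mathcal{B})$ for every such $c$, is the right way to close that hole. Second, your symmetry remark---that $\text{Pr}(Y\in\mathcal{A})$ and $\text{Pr}(Y\in\mathcal{B})$ depend on $\delta$ only through $\|\delta\|_0$---is what makes the ``$\forall\,\|\delta\|_0=k$'' constraint checkable with one representative per $k$, and is exactly what the explicit formulas in Section~\ref{certify_practice} confirm.

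The step you leave open---that feasibility is downward closed in $k$, i.e., that $\text{Pr}(Y\in\mathcal{A})-\text{Pr}(Y\in\mathcal{B})$ does not increase as $\|\delta\|_0$ grows---is a genuine gap in your write-up, but it is equally a gap in the paper: the published proof passes from ``the constraint holds at $k=K$'' to ``the conclusion holds for all $\|\delta\|_0\le K$'' without comment, even though the regions $\mathcal{R}_j$ and the cutoffs $a^\star,b^\star$ move with $\delta$, so, as you note, one cannot simply compare masses of fixed sets. In practice the issue is moot because $K$ is computed by incrementing $k$ from $0$ and stopping at the first violation, which certifies precisely the downward-closed set of feasible sizes; for a clean theorem you should either prove the monotonicity (one route is the variational characterization $\text{Pr}(Y\in\mathcal{A})=\min\{\text{Pr}(Y\in S):\text{Pr}(X\in S)\ge\underline{p_A}\}$ and its counterpart for $\mathcal{B}$, which follow from Lemma~\ref{neyman-pearson-variant}, combined with a direct comparison of the resulting quantities at sizes $k-1$ and $k$) or restate $K$ as the largest $k$ such that the constraint holds for all $k'\le k$. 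Apart from this shared omission, your argument matches the paper's and is, if anything, more complete.
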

\begin{proof}
See Section~\ref{proof_theorem}. 
\end{proof}

Next, we show that our certified perturbation size is tight in the following theorem. 

\begin{restatable}[Tightness of the Certified Perturbation Size]{thm}{tightnessofcertifiedperturbationsize}
\label{tighnessbound}
Assume $\underline{p_A} \geq \overline{p_B}$, $\underline{p_A}+\overline{p_B}\leq 1$, and $\underline{p_A}+ (|\mathcal{C}|-1)\cdot \overline{p_B} \geq 1$, where $|\mathcal{C}| $ is the number of labels. For any perturbation $\delta$ with $||\delta||_0 > K$, there exists a base classifier $f^{*}$ consistent with Equation~\ref{main_theorem_condition_label} such that $g(\mathbf{s} \oplus \delta)\neq c_A$ or there exist ties.  
\end{restatable}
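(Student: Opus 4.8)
The plan is to prove tightness by explicitly constructing a worst-case base classifier $f^{*}$ that meets the probability constraints of Equation~\ref{main_theorem_condition_label} with equality, yet for which the smoothed classifier fails to single out $c_A$ as soon as $||\delta||_0 > K$. The guiding principle is that the regions $\mathcal{A}$ and $\mathcal{B}$ built in Equations~\ref{region_A}--\ref{region_B} are precisely the Neyman--Pearson-optimal sets, so a classifier whose $c_A$-region is exactly $\mathcal{A}$ and whose $c_B$-region contains $\mathcal{B}$ will simultaneously attain the lower bound of Lemma~\ref{lemmabounds} on $\text{Pr}(f(Y)=c_A)$ and the upper bound on $\text{Pr}(f(Y)=c_B)$, turning those inequalities into the identities that drive the failure.

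Concretely, first I would set $f^{*}(\mathbf{z})=c_A$ for every $\mathbf{z}\in\mathcal{A}$ and $f^{*}(\mathbf{z})=c_B$ for every $\mathbf{z}\in\mathcal{B}$. The assumption $\underline{p_A}+\overline{p_B}\leq 1$ guarantees $\text{Pr}(X\in\mathcal{A})+\text{Pr}(X\in\mathcal{B})\leq 1$, so, choosing the boundary subregions $\underline{\mathcal{R}_{a^\star}}$ and $\underline{\mathcal{R}_{b^\star}}$ disjointly, $\mathcal{A}$ and $\mathcal{B}$ can be taken disjoint and $f^{*}$ is well defined on $\mathcal{A}\cup\mathcal{B}$. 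On the leftover set $\{0,1\}^n\setminus(\mathcal{A}\cup\mathcal{B})$, whose $X$-mass equals $1-\underline{p_A}-\overline{p_B}$, I would spread the labels in $\mathcal{C}\setminus\{c_A,c_B\}$ so that no label receives $X$-probability exceeding $\overline{p_B}$. This allocation is feasible precisely because the remaining mass $1-\underline{p_A}-\overline{p_B}$ does not exceed the total slack $(|\mathcal{C}|-2)\,\overline{p_B}$ available over the $|\mathcal{C}|-2$ spare labels, which is equivalent to the third hypothesis $\underline{p_A}+(|\mathcal{C}|-1)\,\overline{p_B}\geq 1$ (any discreteness being absorbed by splitting a region fractionally, exactly as in the subregion device used to define $\mathcal{A}$ and $\mathcal{B}$). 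By construction $\text{Pr}(f^{*}(X)=c_A)=\text{Pr}(X\in\mathcal{A})=\underline{p_A}$ and $\max_{c\neq c_A}\text{Pr}(f^{*}(X)=c)\leq\overline{p_B}$, and with $\underline{p_A}\geq\overline{p_B}$ the label $c_A$ is a legitimate top label, so $f^{*}$ is consistent with Equation~\ref{main_theorem_condition_label}.

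Next I would read off the behaviour under $Y$. Since $f^{*}$ equals $c_A$ exactly on $\mathcal{A}$, we have $\text{Pr}(f^{*}(Y)=c_A)=\text{Pr}(Y\in\mathcal{A})$, while $\mathcal{B}\subseteq (f^{*})^{-1}(c_B)$ gives $\text{Pr}(f^{*}(Y)=c_B)\geq\text{Pr}(Y\in\mathcal{B})$. Now I invoke the definition of $K$ in Theorem~\ref{CertifiedPerturbationSize}: because the noise is i.i.d. across coordinates, the gap $\text{Pr}(Y\in\mathcal{A})-\text{Pr}(Y\in\mathcal{B})$ depends on $\delta$ only through $||\delta||_0$, so $K$ being the largest size for which the strict inequality of Equation~\ref{inequalityconstraint} holds means that for every $\delta$ with $||\delta||_0>K$ we have $\text{Pr}(Y\in\mathcal{A})\leq\text{Pr}(Y\in\mathcal{B})$. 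Chaining the displays yields $\text{Pr}(f^{*}(Y)=c_A)\leq\text{Pr}(f^{*}(Y)=c_B)$, so $c_A$ is not the unique maximizer: either $g(\mathbf{s}\oplus\delta)\neq c_A$ or there is a tie, which is exactly the claim.

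The main obstacle I anticipate is the feasibility of the label-distribution step, i.e.\ showing the leftover $X$-mass can genuinely be allocated to the non-$c_A$ labels without any of them crossing the $\overline{p_B}$ ceiling; this is where all three hypotheses enter ($\underline{p_A}\geq\overline{p_B}$ to make $c_A$ the legitimate top label, $\underline{p_A}+\overline{p_B}\leq1$ for disjointness of $\mathcal{A}$ and $\mathcal{B}$, and $\underline{p_A}+(|\mathcal{C}|-1)\,\overline{p_B}\geq1$ for sufficient capacity), and where I must argue that the discrete structure space is fine-grained enough, or else appeal to a fractional/randomized assignment via subregions, to realize the exact target probabilities. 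A secondary point requiring care is justifying that the per-size statement ``for all $||\delta||_0=k$'' collapses to a single representative perturbation, which rests on the coordinatewise symmetry of the noise distribution in Equation~\ref{discretenoisedistribution}.
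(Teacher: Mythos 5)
Your construction is essentially identical to the paper's proof: both define $f^{*}$ to be $c_A$ on $\mathcal{A}$, $c_B$ on $\mathcal{B}$, and distribute the remaining labels over $\{0,1\}^n\setminus(\mathcal{A}\cup\mathcal{B})$ so that each receives $X$-mass at most $\overline{p_B}$ (feasible exactly by $\underline{p_A}+(|\mathcal{C}|-1)\overline{p_B}\geq 1$), then invoke the maximality of $K$ to get $\text{Pr}(Y\in\mathcal{A})\leq\text{Pr}(Y\in\mathcal{B})$ for $||\delta||_0>K$. Your write-up is correct and in fact spells out the feasibility of the label allocation and the reduction to a single representative $\delta$ slightly more explicitly than the paper does.
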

\begin{proof}
See Section~\ref{proof_tight}.
\end{proof}

We have several observations on our major theorems. 
\begin{itemize}
\item Our theorems are applicable to any base node/graph classifier. Moreover, although we focus on classifiers on graphs, our theorems are applicable to any base classifier that takes binary features as input.  

\item Our certified perturbation size $K$ depends on $\underline{p_A}$, $\overline{p_B}$, and $\beta$. In particular, when the probability bounds $\underline{p_A}$ and $\overline{p_B}$ are tighter, our certified perturbation size $K$ is larger. We use the probability bounds $\underline{p_A}$ and $\overline{p_B}$ instead of their exact values, because it is challenging to compute $p_A$ and  $p_B$ exactly. 

\item When  no assumptions on the base classifier are made and randomized smoothing with the noise distribution defined in Equation~\ref{discretenoisedistribution} is used, it is impossible to certify a perturbation size larger than $K$. 
\end{itemize}

\subsection{Certification in Practice}
\label{certify_practice} 
Computing our certified perturbation size $K$ in practice faces two challenges. The first challenge is to estimate the probability bounds $\underline{p_A}$ and $\overline{p_B}$. The second challenge is to solve the optimization problem in Equation~\ref{problemK} to get $K$ with the given $\underline{p_A}$ and $\overline{p_B}$. We leverage the method developed in \cite{jia2020certified} to address the first challenge, and we develop an efficient algorithm to address the second challenge. 

\myparatight{Estimating $\underline{p_A}$ and $\overline{p_B}$} 
We view the probabilities $p_1, p_2, \cdots, p_{|\mathcal{C}|}$ as a multinomial distribution over the label set $\mathcal{C}$. If we sample
a noise $\epsilon$ from our noise distribution uniformly at random, then
$f(\mathbf{s} \oplus \epsilon)$ can be viewed as a sample from the multinomial distribution.
Then, estimating $\underline{p_A}$ and $\overline{p_c}$ for $c \in \mathcal{C} \setminus \{c_A\} $ is essentially a
one-sided simultaneous confidence interval estimation problem. 
In particular, we leverage the simultaneous confidence interval estimation
method developed in \cite{jia2020certified} to estimate these bounds with
a confidence level at least $1-\alpha$. Specifically, we sample $d$ random
noise, i.e., $\epsilon^1, \epsilon^2, \cdots, \epsilon^d$, from the noise distribution defined in
Equation~\ref{discretenoisedistribution}. We denote by $d_c$ the frequency of the label $c$ predicted by 
the base classifier for the $d$ noisy examples. Formally, we have
$d_c=\sum_{j=1}^d \mathbb{I} (f(\mathbf{s} \oplus \epsilon^j)=c)$ for each $c\in \mathcal{C}$ and $\mathbb{I}$ is the indicator function.
Moreover, we assume $c_A$ has the largest frequency and the smoothed classifier predicts $c_A$ as the label, i.e., $g(\mathbf{s}) = c_A$. According
to \cite{jia2020certified}, we have the following probability bounds with a
confidence level at least $1 - \alpha$:
\begin{align}
\label{compute_lower_bound_simuem}
 & \underline{p_A}=B\left(\frac{\alpha}{|\mathcal{C}|};d_{c_A},d-d_{c_A}+1 \right) \\
 \label{compute_upper_bound_simuem}
 &    \overline{p_c}=B\left(1-\frac{\alpha}{|\mathcal{C}|};d_c+1, d-d_c \right), \  \forall c \neq c_A,
\end{align}
where $B(q; u,w)$ is the $q$th quantile of a beta distribution with shape parameters $u$ and $w$.
Then, 
we estimate $\overline{p_B}$ as follows: 
\begin{align}
\overline{p_B}=\min\left(\max_{c \neq c_A} \overline{p_c}, 1 - \underline{p_A}\right).
\label{compute_upper_bound_pB}
\end{align}

\myparatight{Computing $K$}  
After estimating ${\underline{p_A}}$ and  $\overline{p_B}$, we solve the optimization problem in Equation~\ref{problemK} to obtain $K$. 
First, we have: 
{
\begin{align}
& \text{Pr}(Y \in \mathcal{A}) = \sum_{j=1}^{a^{\star}-1} \text{Pr} (Y \in \mathcal{R}_j) + (\underline{p_A} - \sum_{j=1}^{a^{\star}-1} \text{Pr} (X \in \mathcal{R}_j)) /  r_{a^{\star}} \label{x2y}, \\
& \text{Pr}(Y\in \mathcal{B}) = \sum_{j=b^{\star}+1}^{2n+1} \text{Pr} (Y \in \mathcal{R}_j) + (\overline{p_B} - \sum_{j=b^{\star}+1}^{2n+1} \text{Pr} (X \in \mathcal{R}_j)) /  r_{b^{\star}}, \label{x2y2} 
\end{align}
}%
where $r_{a^{\star}}$ and $r_{b^{\star}}$ are the density ratios in the regions $\mathcal{R}_{a^\star}$ and $\mathcal{R}_{b^\star}$, respectively. 
Therefore, the key to solve the optimization problem is to compute $\text{Pr}(X \in \mathcal{R}(m))$ and $\text{Pr}(Y \in \mathcal{R}(m))$ for each $m\in \{-n, -n+1, \cdots, n\}$ when $||\delta||_0=k$. Specifically, we have:
\begin{align}
\label{probabilityX}
&\text{Pr}(X \in \mathcal{R}(m)) = \sum_{j=\max\{0,m\}}^{\min\{n,n+m\}} \beta^{n-(j-m)} (1-\beta)^{(j-m)} \cdot t(m,j)\\
\label{probabilityY}
&\text{Pr}(Y \in \mathcal{R}(m)) =\sum_{j=\max\{0,m\}}^{\min\{n,n+m\}} \beta^{n-j} (1-\beta)^j \cdot t(m,j),
\end{align}
where $t(m,j)$ is defined as follows:
{
\begin{align}
t(m,j) = 
    \begin{cases}
      0,  &\textrm{ if } (m+k) \textrm{ mod } 2 \neq 0, \\
      0,  &\textrm{ if } 2j-m<k, \\
  {n-k \choose \frac{2j-m-k}{2}} {k \choose \frac{k-m}{2}}, &\textrm{ otherwise. }  
  \end{cases}
\end{align}
}%
See Section~\ref{compute_prob} for the details on obtaining Equation~\ref{probabilityX} and~\ref{probabilityY}. 
Then, we iteratively find the largest $k$ such that the constraint in Equation~\ref{inequalityconstraint} is satisfied.

\section{Evaluation}
\label{eval}

\begin{table}[!t]\renewcommand{\arraystretch}{1.3}
\centering
\caption{Dataset statistics.}
\centering
\addtolength{\tabcolsep}{-2pt}
\begin{tabular}{|c|c|c|c|c|} \hline 
\multicolumn{2}{|c|}{\bf Dataset} & {\small \#Nodes} & {\small \#Edges}  & {\small \#Classes}   \\ \hline
\multirow{3}{*}{\bf \makecell{Node \\ Classification}} & {\bf Cora} &  {\small 2,708} & {\small 5,429} & {\small 7} \\ \cline{2-5} 
& {\bf Citeseer} &  {\small 3,327} & {\small 4,732}  & {\small 6} \\ \cline{2-5} 
& {\bf Pubmed} &  {\small 19,717} & {\small 44,338} & {\small 3}  \\ \hline \hline
\multicolumn{2}{|c|}{\bf Dataset} & {\small \#Graphs} & {\small Ave.\#Nodes} & {\small \#Classes}   \\ \hline 
\multirow{3}{*}{\bf \makecell{Graph \\ Classification}} & {\bf MUTAG} &  {\small 188} & {\small 17.9}  & {\small 2} \\ \cline{2-5} 
& {\bf PROTEINS} &  {\small 1,113} & {\small  39.1}  & {\small 2} \\ \cline{2-5} 
& {\bf IMDB} &  {\small 1,500} & {\small 13.0}  & {\small 3}  \\ \hline
\end{tabular}
\label{dataset_stat}
\end{table}

\begin{figure*}[!t]
\center
\subfloat[Cora]{\includegraphics[width=0.28\textwidth]{./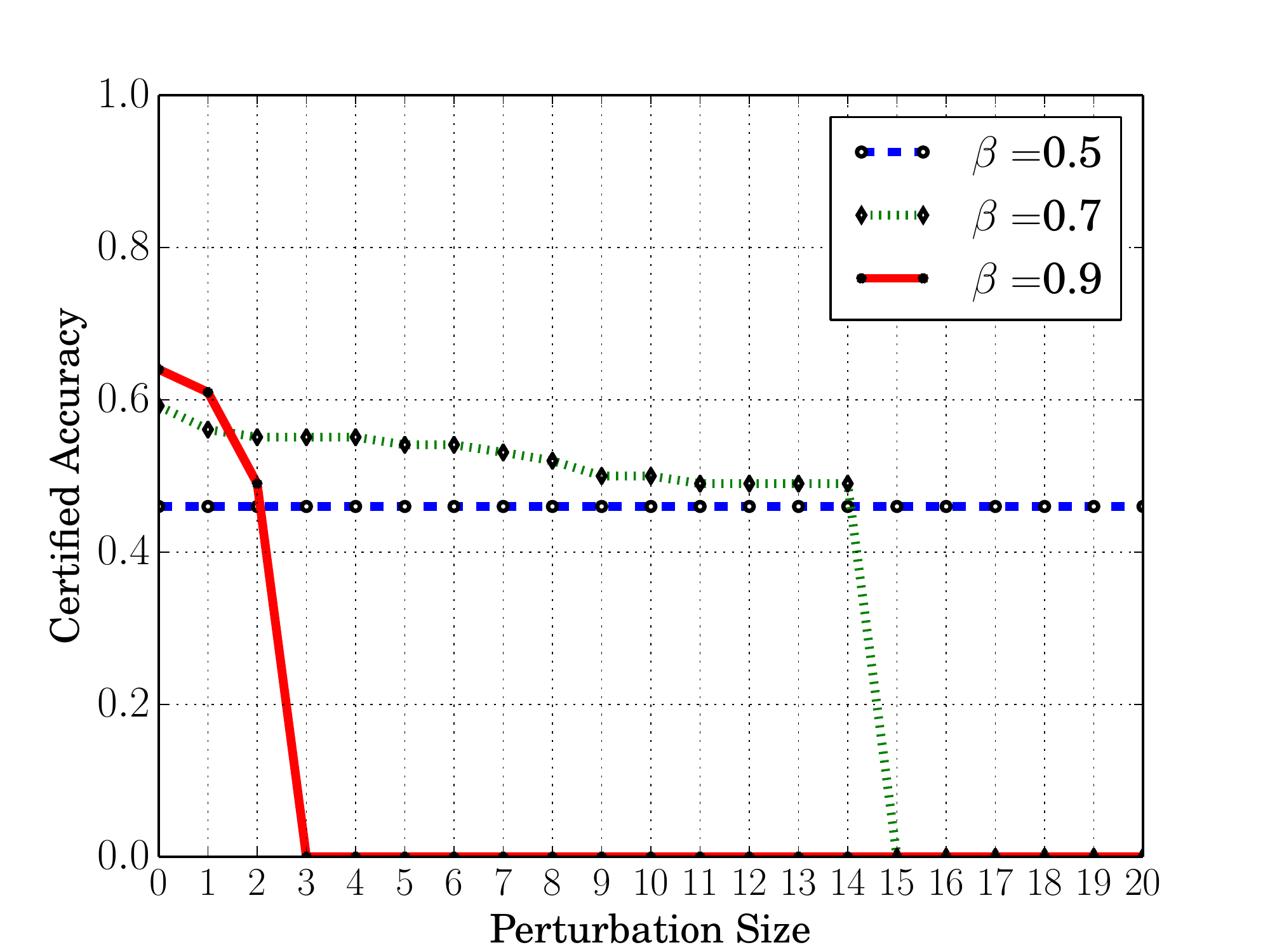} \label{fig1}} 
\subfloat[Citeseer]{\includegraphics[width=0.28\textwidth]{./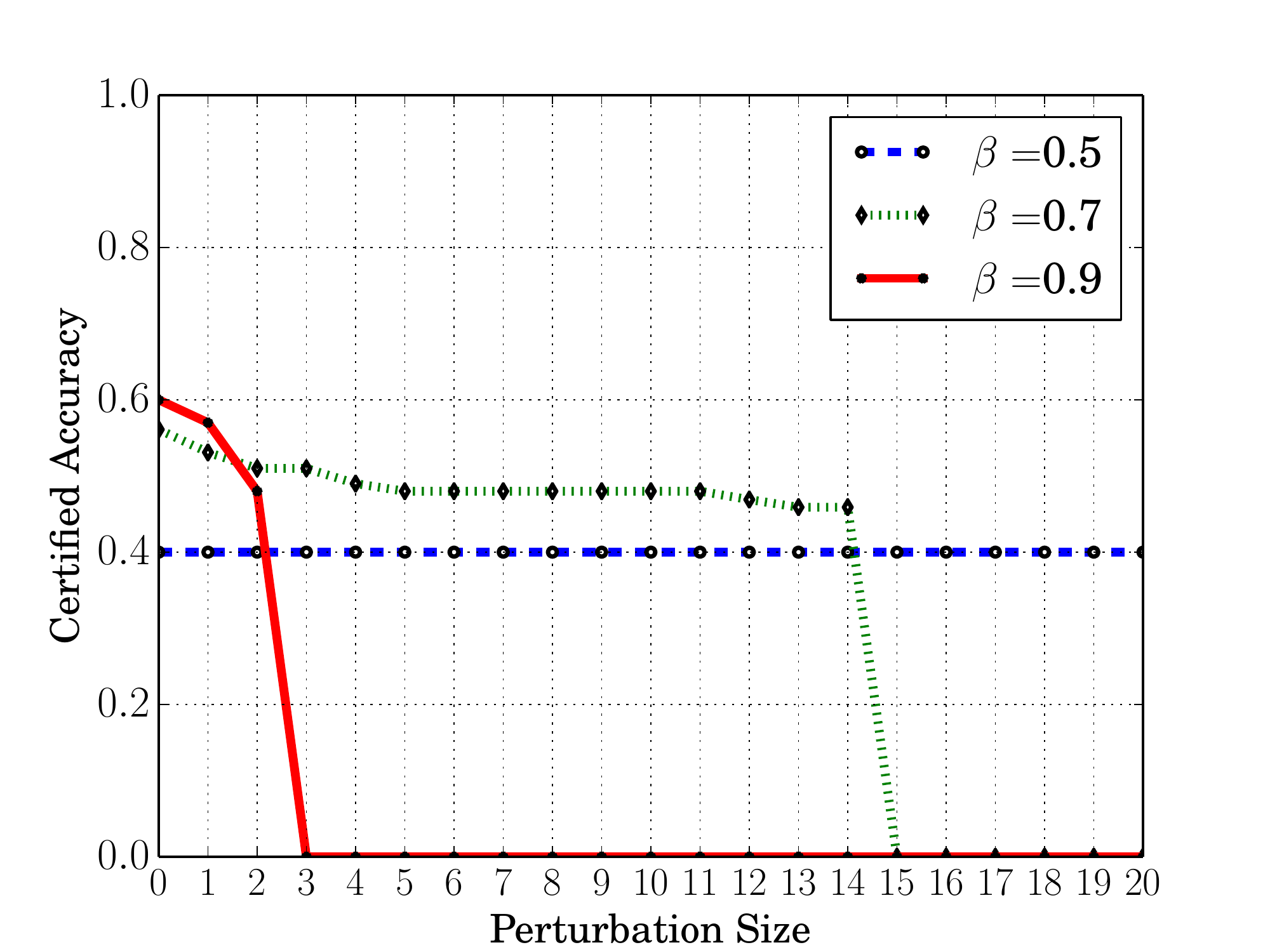} \label{fig2}} 
\subfloat[Pubmed]{\includegraphics[width=0.28\textwidth]{./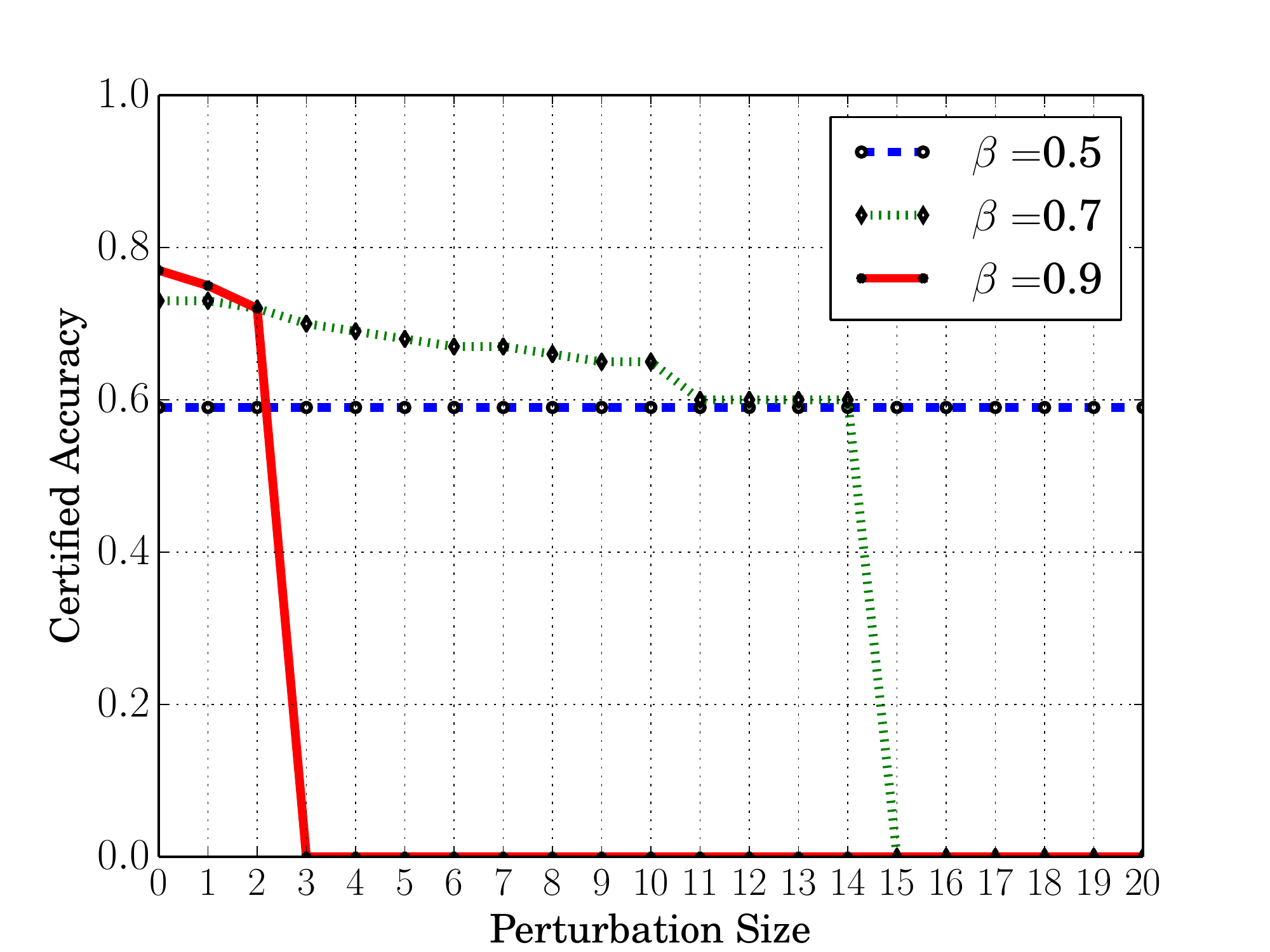} \label{fig3}} 
\vspace{-2mm}
\caption{Impact of $\beta$ on the certified accuracy of the smoothed GCN.}
\label{certify_GCN}
\vspace{-8mm}
\end{figure*}

\begin{figure*}[!t]
\center
\subfloat[Cora]{\includegraphics[width=0.28\textwidth]{./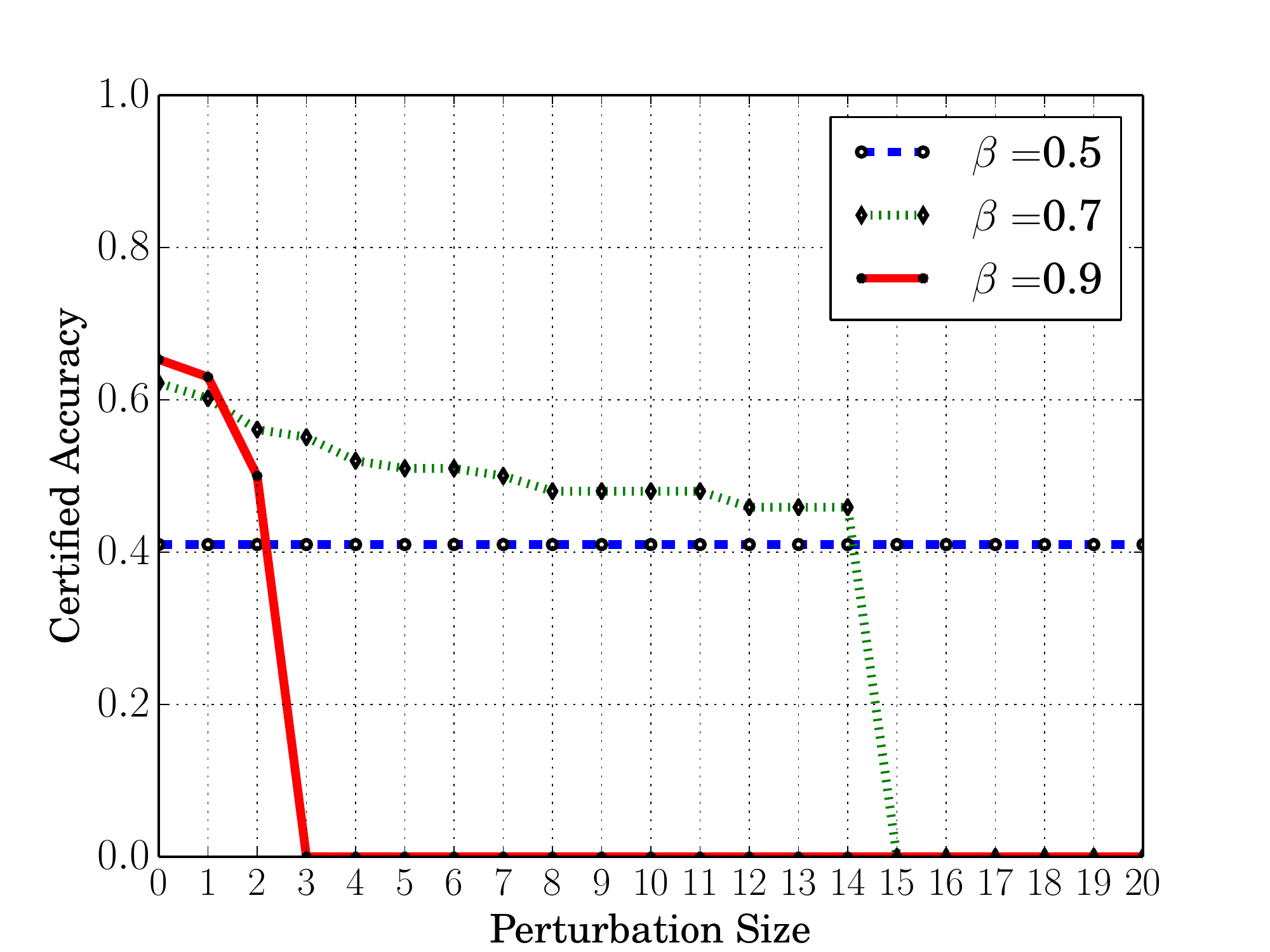} \label{fig1}} 
\subfloat[Citeseer]{\includegraphics[width=0.28\textwidth]{./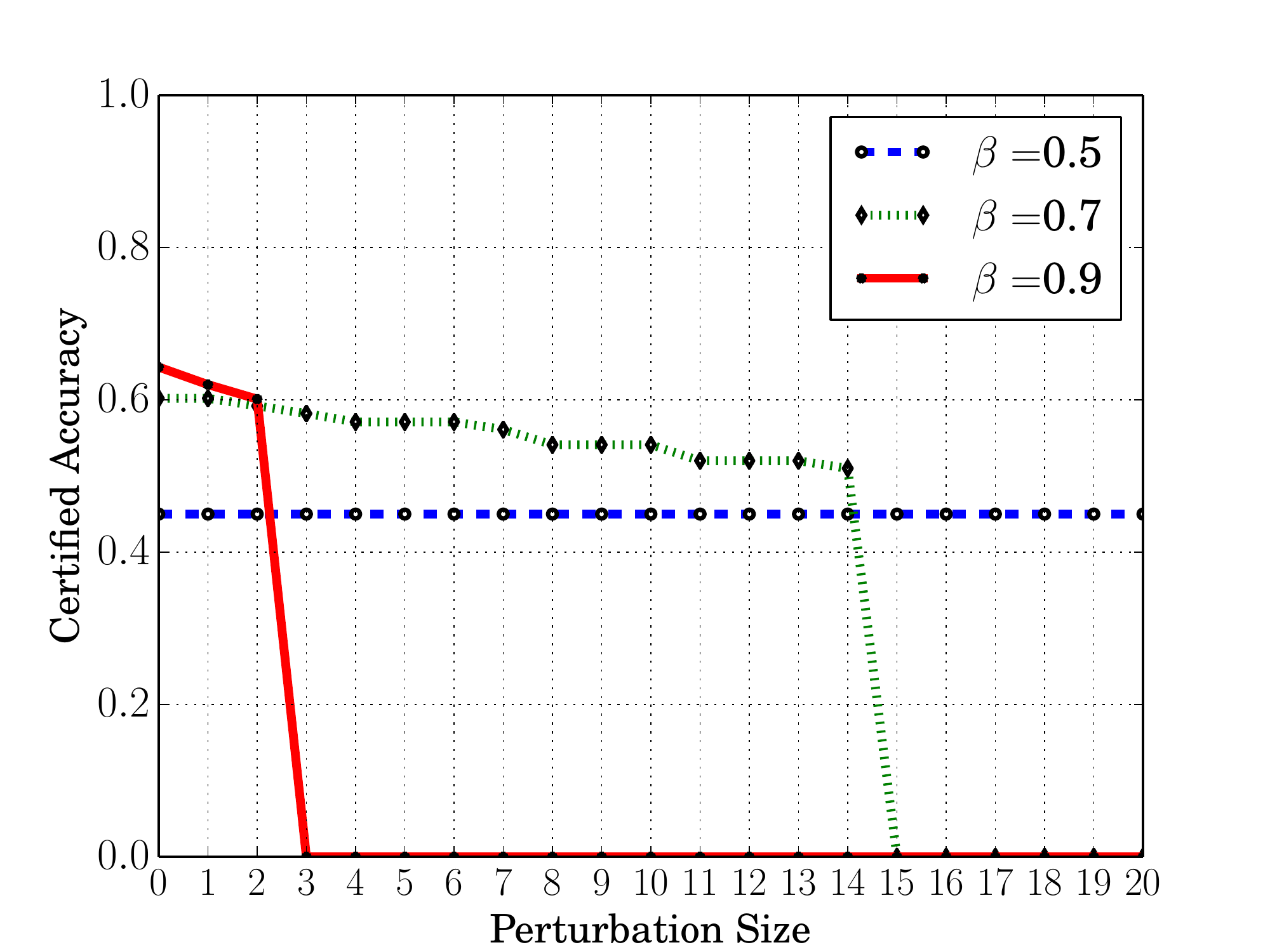} \label{fig2}} 
\subfloat[Pubmed]{\includegraphics[width=0.28\textwidth]{./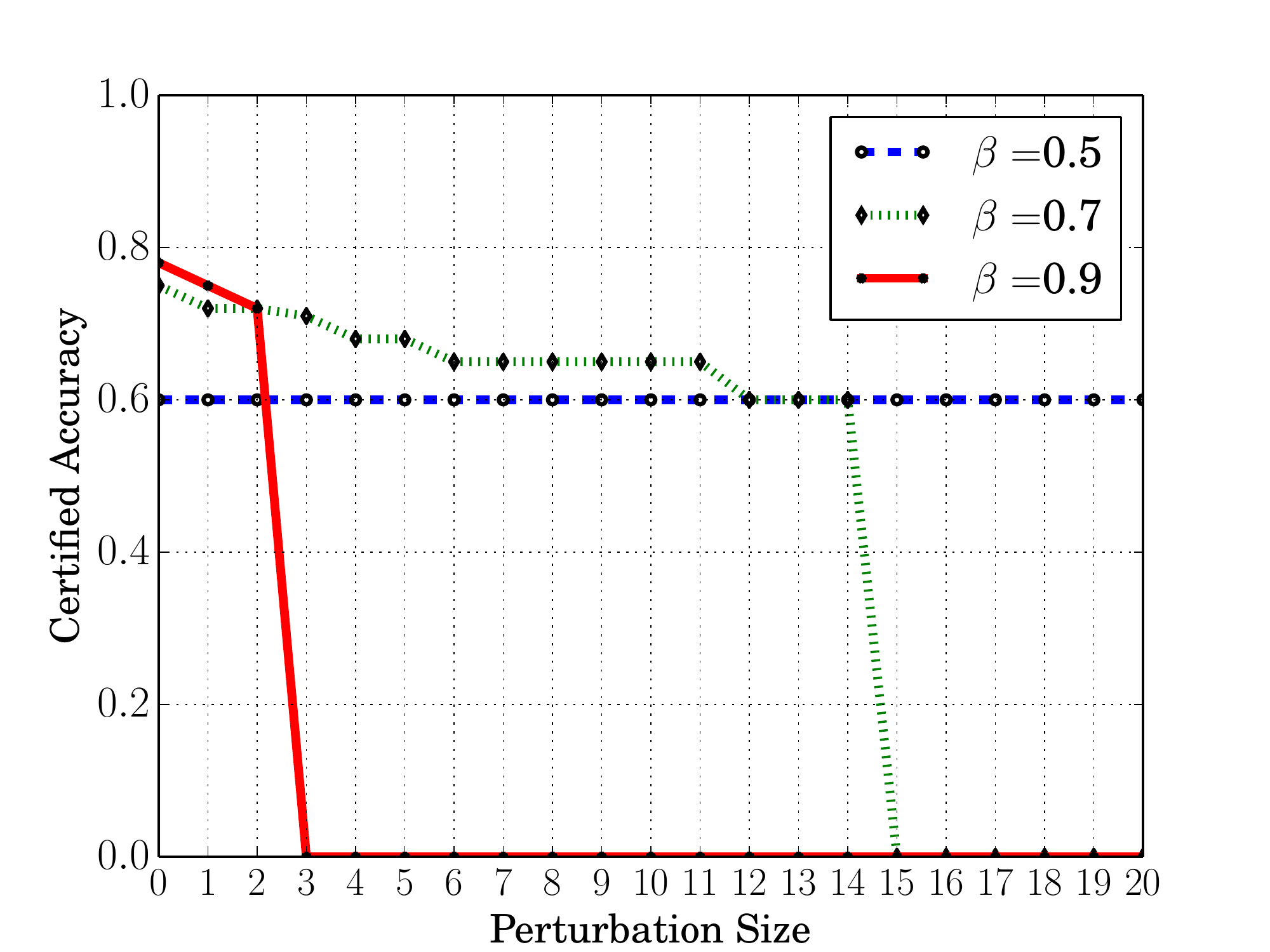} \label{fig3}} 
\vspace{-2mm}
\caption{Impact of $\beta$ on the certified accuracy of the smoothed GAT.}
\label{certify_GAT}
\vspace{-8mm}
\end{figure*}

\begin{figure*}[!t]
\center
\subfloat[MUTAG]{\includegraphics[width=0.28\textwidth]{./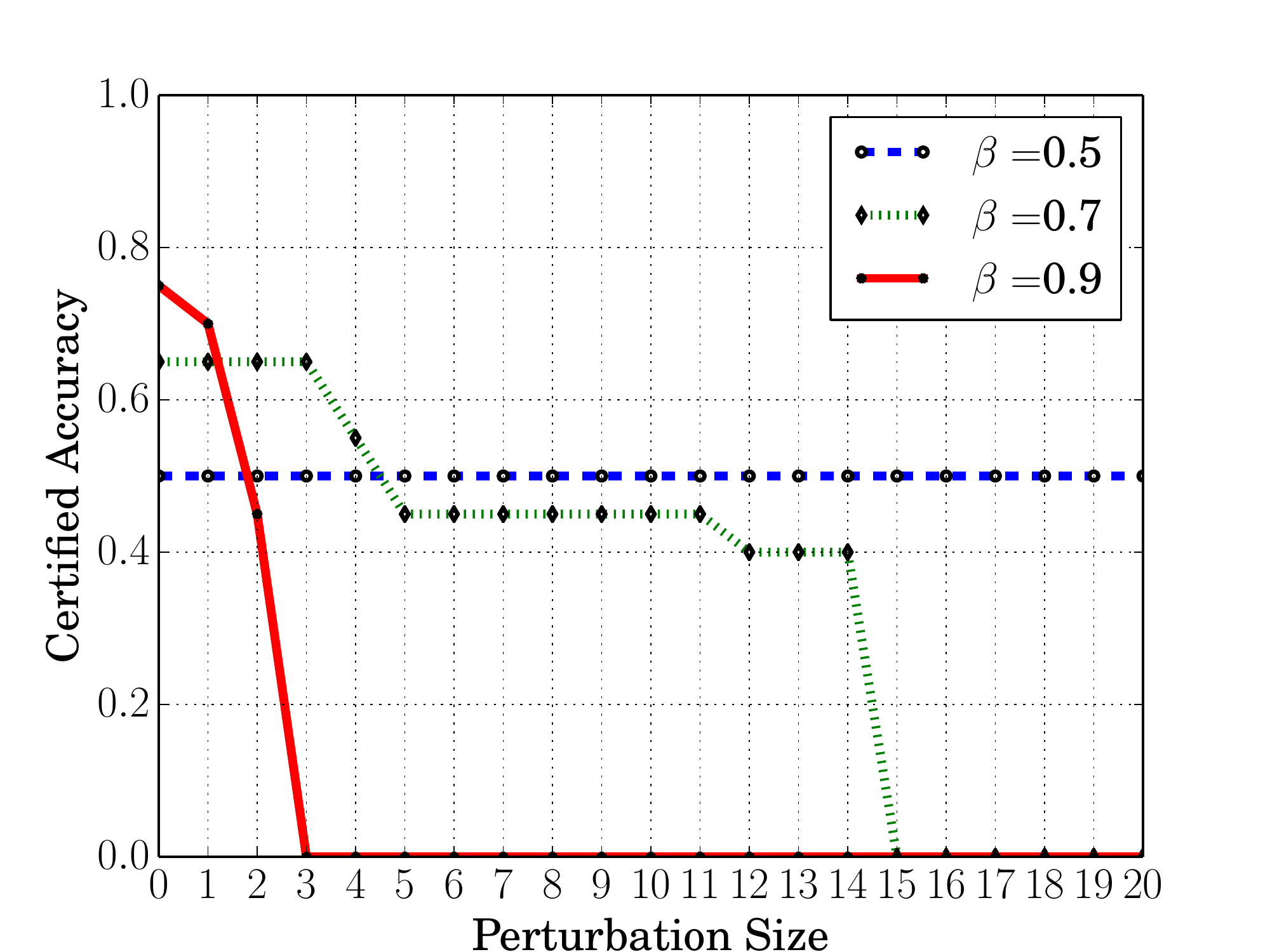} \label{mutag}} 
\subfloat[PROTEINS]{\includegraphics[width=0.28\textwidth]{./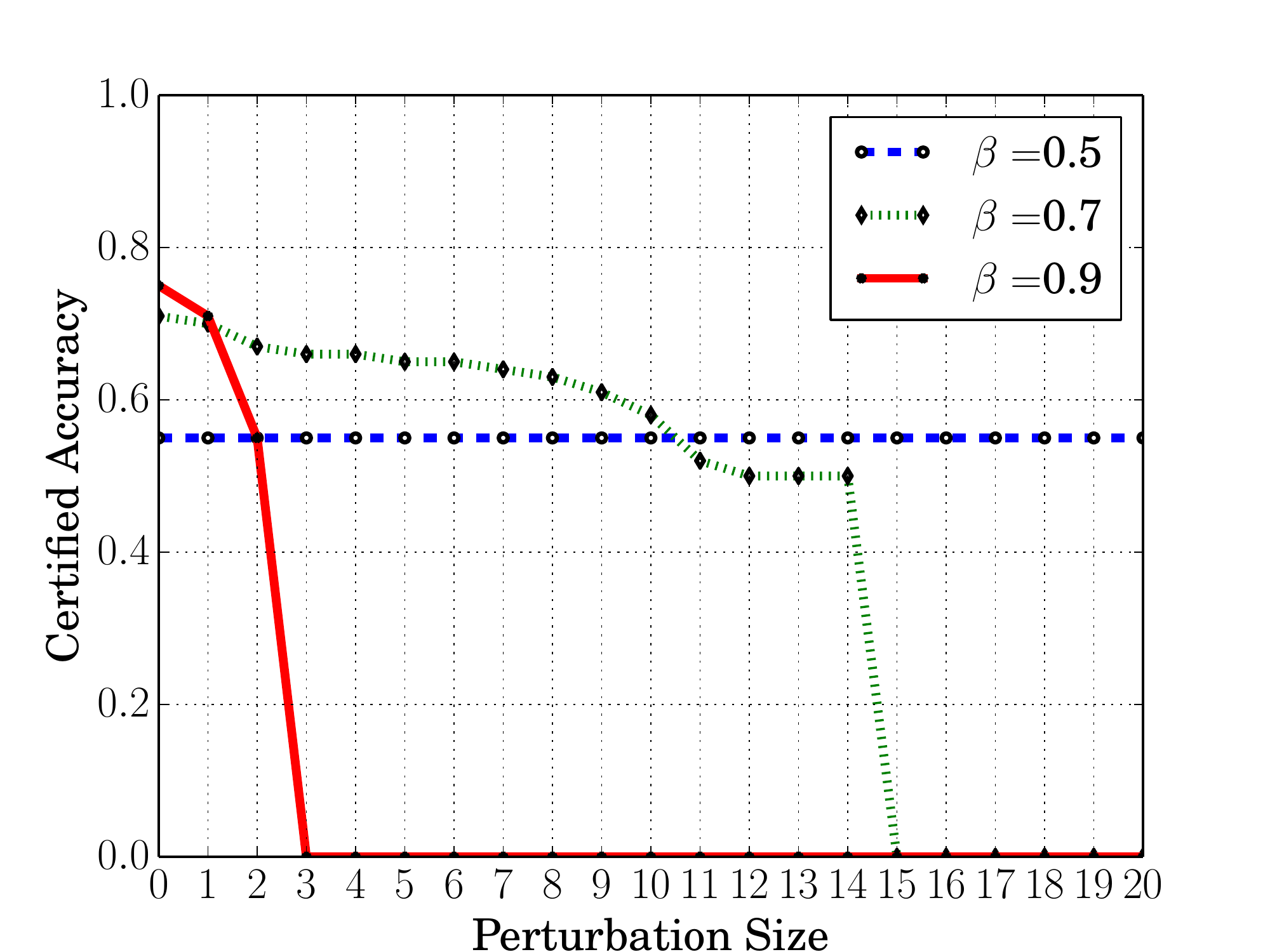} \label{proteins}} 
\subfloat[IMDB]{\includegraphics[width=0.28\textwidth]{./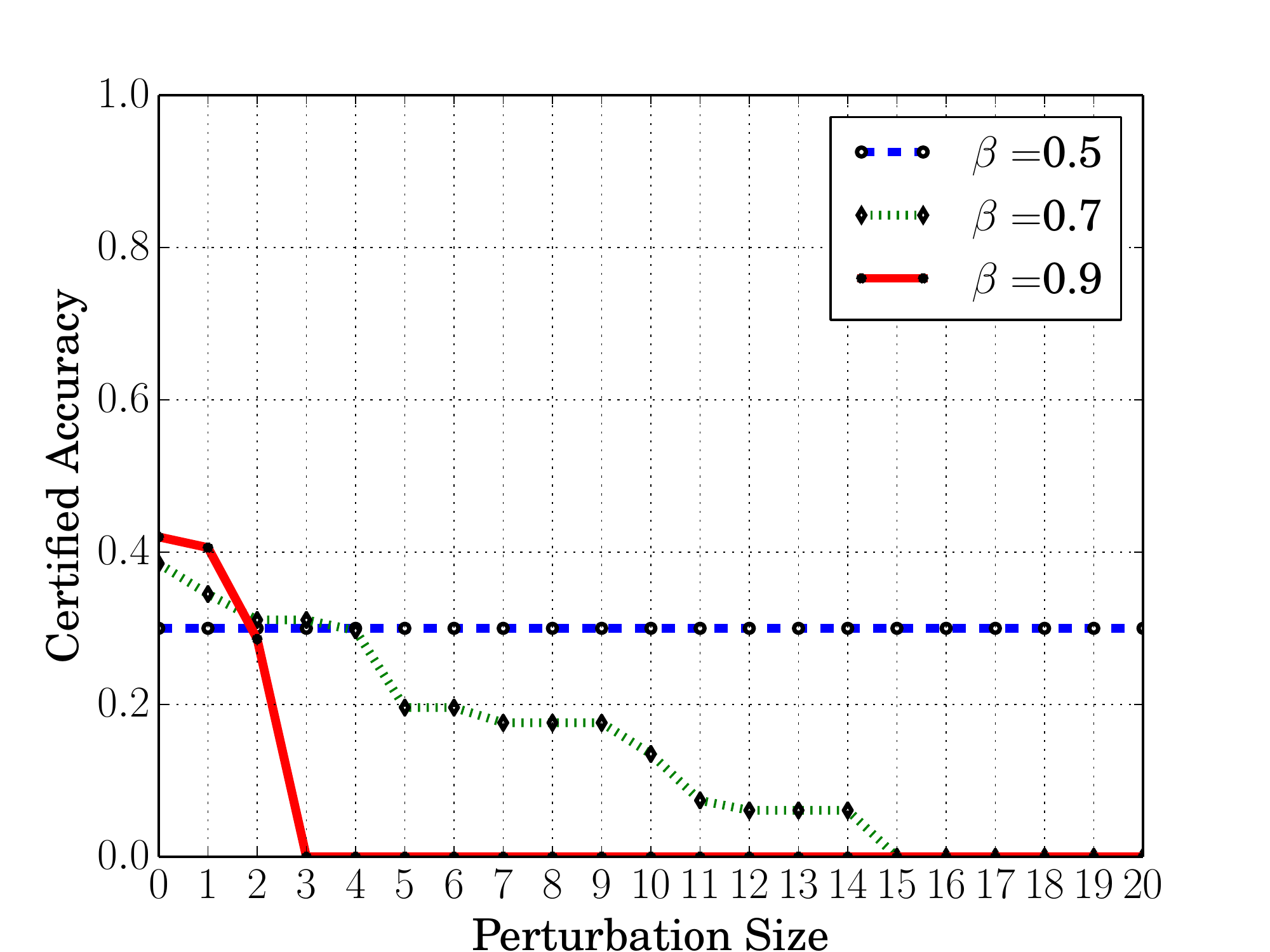} \label{fig3}} 
\vspace{-2mm}
\caption{Impact of $\beta$ on the certified accuracy of the smoothed GIN.}
\label{certify_GIN}
\vspace{-2mm}
\end{figure*}

\subsection{Experimental Setup}

We evaluate  our method on multiple GNNs  and benchmark datasets for both node classification and graph classification. 

\myparatight{Benchmark datasets and GNNs}
We use benchmark graphs and GNNs for both node and graph classification.  Table~\ref{dataset_stat} shows the statistics of our graphs.
\begin{itemize}[leftmargin=*]
\item {\bf Node classification}: We consider Graph Convolutional Network (GCN)~\cite{kipf2017semi} and Graph Attention Network (GAT)~\cite{velivckovic2018graph} for node classification. Moreover, we use the Cora, Citeseer, and Pubmed datasets~\cite{sen2008collective}. They are citation graphs, where nodes are documents and edges indicate citations between documents. In particular,  an undirected edge between two documents is created if one document cites the other. The bag-of-words feature of a document is treated as the node feature. Each document also has a label. 

\item {\bf Graph classification}: We consider Graph Isomorphism Network (GIN)~\cite{xu2019powerful} for graph classification. Moreover, we use the MUTAG, PROTEINS, and  IMDB datasets~\cite{yanardag2015deep}. MUTAG and PROTEINS are bioinformatics datasets. MUTAG contains 188 mutagenic aromatic and heteroaromatic nitro compounds, where each compound represents a graph and each label means whether or not the compound has a mutagenic effect on the Gramnegative bacterium \emph{Salmonella typhimurium}. PROTEINS is a dataset where nodes are secondary structure elements and there is an edge between two nodes if they are neighbors in the amino-acid sequence or in three-dimensional space. Each protein is represented as a graph and is labeled as enzyme or
non-enzyme. IMDB is a movie collaboration dataset. Each graph is an ego-network of an actor/actress, where nodes are actors/actresses and an edge between two actors/actresses indicates that they appear in the same movie. Each graph is obtained from a certain genre of movies, and the task is to classify the genre of a graph.  
\end{itemize}

\myparatight{Training and testing} 
For each node classification dataset, following previous works~\cite{kipf2017semi,velivckovic2018graph}, we sample 20 nodes from each class uniformly at random as the training dataset. Moreover, we randomly sample 100 nodes for testing.   
For each graph classification dataset, we use 90\% of the graphs for training and the remaining 10\%  for testing, similar to~\cite{xu2019powerful}.

\begin{figure}[t]
\center
\subfloat[Smoothed GCN on Cora]{\includegraphics[width=0.23\textwidth]{./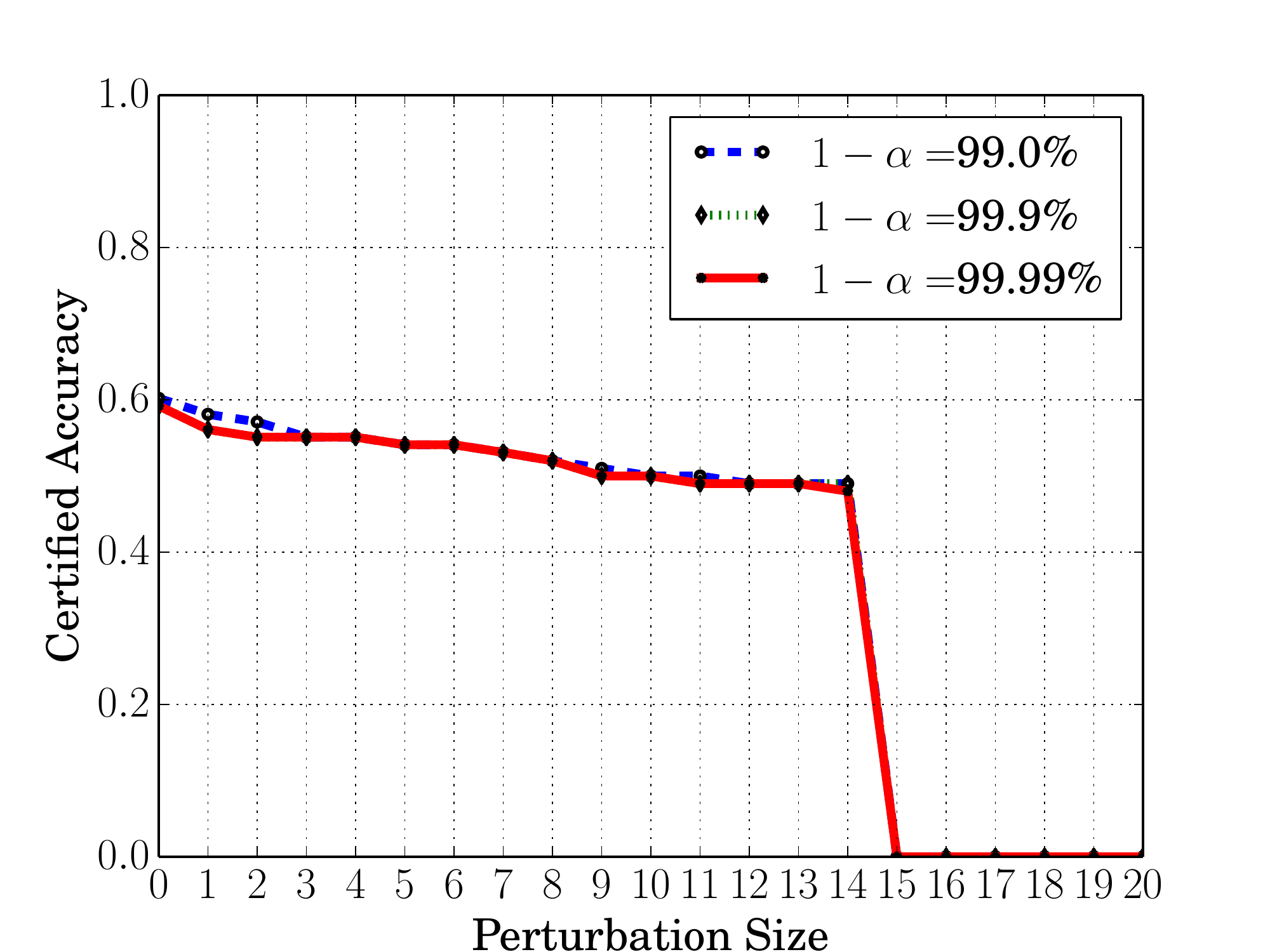} \label{confidence_GCN}}
\subfloat[Smoothed GIN on MUTAG]{\includegraphics[width=0.23\textwidth]{./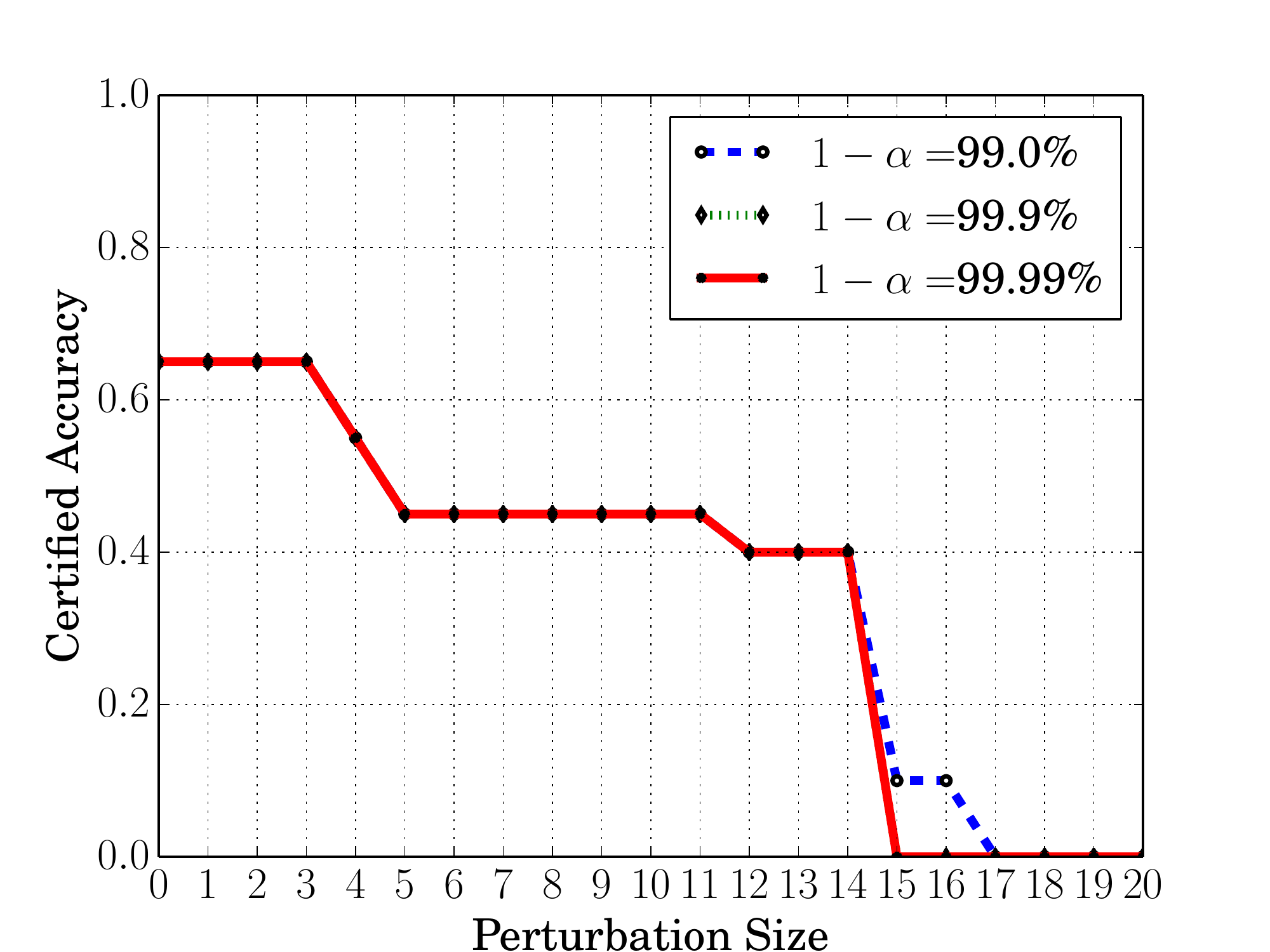} \label{confidence_GIN}} 
\caption{Impact of $1-\alpha$ on the certified accuracy of (a) the smoothed GCN and (b) the smoothed GIN.}
\label{confidence_level}
\end{figure}

 \vspace{-4mm}
\myparatight{Parameter setting} We implement our method in pyTorch. To compute the certified perturbation size,  our method needs to specify the noise parameter $\beta$, the confidence level $1-\alpha$, and the number of samples $d$. 
Unless otherwise mentioned, we set  $\beta=0.7$, $1-\alpha=99.9\%$,  and  $d = 10,000$. We also explore the impact of each parameter while fixing the other parameters to the default settings in our experiments. When computing the certified perturbation size for a node $u$ in node classification, we consider an attacker perturbs the connection status between $u$ and the remaining nodes in the graph. 
We use the publicly available source code for GCN, GAT, and GIN. 
We use our randomized smoothing to smooth each classifier.

\subsection{Experimental Results}
Like Cohen et al.~\cite{cohen2019certified}, we use  \emph{certified accuracy} as the metric to evaluate our method. Specifically,  for a smoothed GNN classifier and a given perturbation size, certified accuracy is the fraction of testing nodes (for node classification) or testing graphs (for graph classification), whose labels are correctly predicted by the smoothed classifier and whose certified perturbation size is no smaller than the given perturbation size.

\myparatight{Impact of the noise parameter $\beta$}
Figure~\ref{certify_GCN} and Figure~\ref{certify_GAT} respectively show the certified accuracy of the smoothed GCN and smoothed GAT vs. perturbation size for different $\beta$ on the three node classification datasets. Figure~\ref{certify_GIN} shows the certified accuracy of GIN with randomized smoothing vs. perturbation size for different $\beta$  on the three graph classification datasets.

We have two observations. First, when $\beta=0.5$, the certified accuracy is independent with the perturbation size, which means that an attacker can not attack a smoothed GNN classifier via perturbing the graph structure. This is because $\beta=0.5$ essentially means that the graph is sampled from the space $\{0,1\}^n$ uniformly at random. In other words, the graph structure does not contain information about the node labels and a smoothed GNN classifier is reduced to only using the node features.  
Second, $\beta$ controls a tradeoff between accuracy under no attacks and robustness. Specifically, when $\beta$ is larger, the accuracy under no attacks (i.e., perturbation size is 0) is larger, but the certified accuracy drops more quickly as the perturbation size increases.

\myparatight{Impact of the confidence level $1-\alpha$}
Figure~\ref{confidence_GCN} and~\ref{confidence_GIN} show the certified accuracy of the smoothed GCN and smoothed GIN vs. perturbation size for different confidence levels, respectively. 
We observe that as the confidence level $1-\alpha$ increases, the certified accuracy curve becomes slightly lower. This is because a higher confidence level leads to a looser estimation of the probability bound $\underline{p_A}$ and $\overline{p_B}$, which means a smaller certified perturbation size for a testing node/graph.  However, the differences of the certified accuracies between different confidence levels are negligible when the confidence levels are large enough.

\myparatight{Impact of the number of samples $d$}
Figure~\ref{number_GCN} and~\ref{number_GIN} show the certified accuracy of the smoothed GCN and smoothed GIN vs. perturbation size for different numbers of samples $d$, respectively. We observe that as the number of samples increases, the certified accuracy curve becomes higher. 
This is because a larger number of samples makes the estimated probability bound $\underline{p_A}$ and $\overline{p_B}$ tighter, which means a larger certified perturbation size for a testing node/graph. On  Cora, the smoothed GCN achieves certified accuracies of 0.55, 0.50, and 0.49 when  $d=50K$ and the attacker arbitrarily adds/deletes at most 5, 10, and 15 edges, respectively. On  MUTAG, GIN with our randomized smoothing achieves certified accuracies of 0.45, 0.45, and 0.40 when $d=50K$ and the attacker arbitrarily adds/deletes at most 5, 10, and 15 edges, respectively.

\begin{figure}[t]
\center
\subfloat[Smoothed GCN on Cora]{\includegraphics[width=0.23\textwidth]{./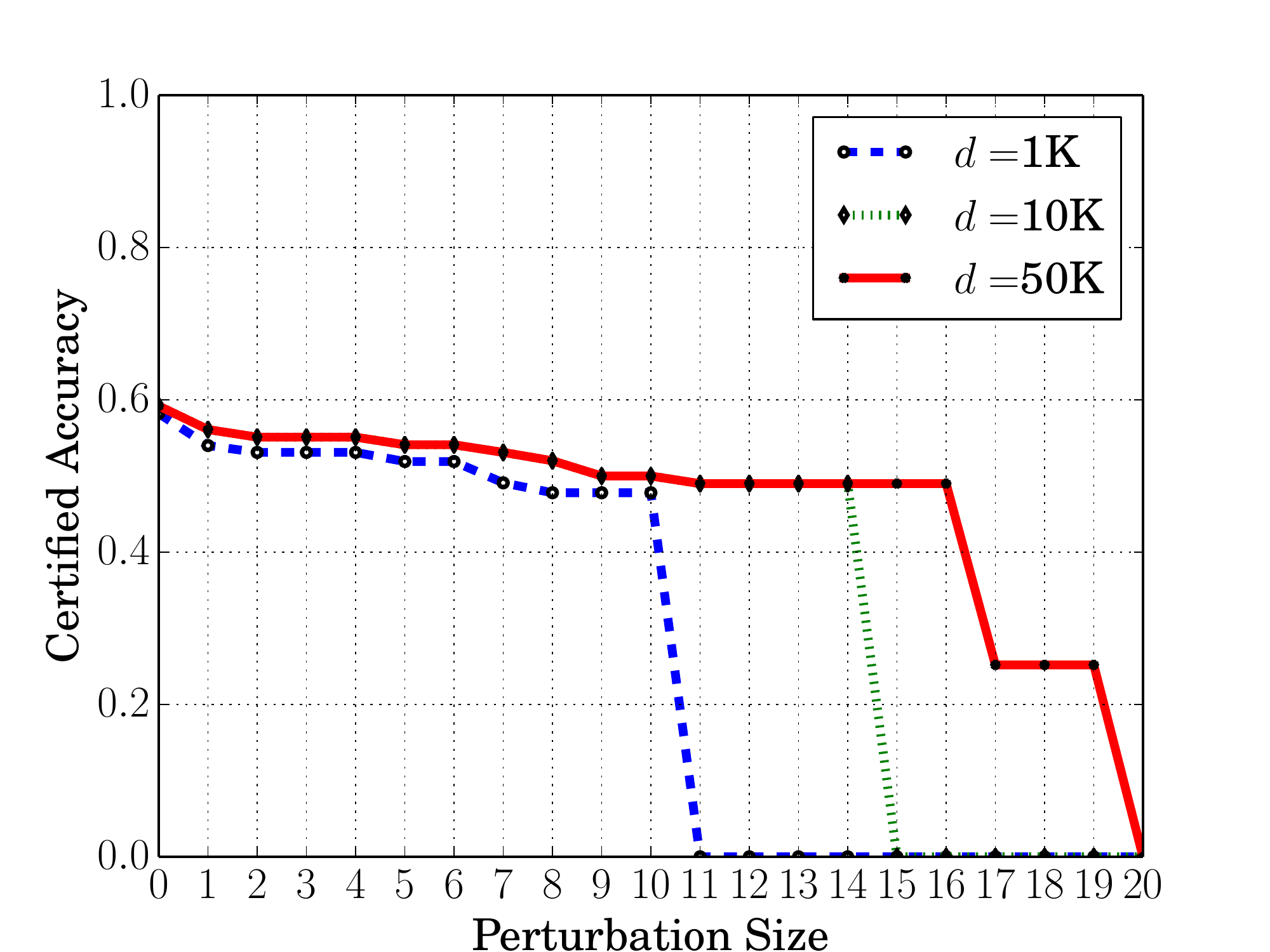} \label{number_GCN}}
\subfloat[Smoothed GIN on MUTAG]{\includegraphics[width=0.23\textwidth]{./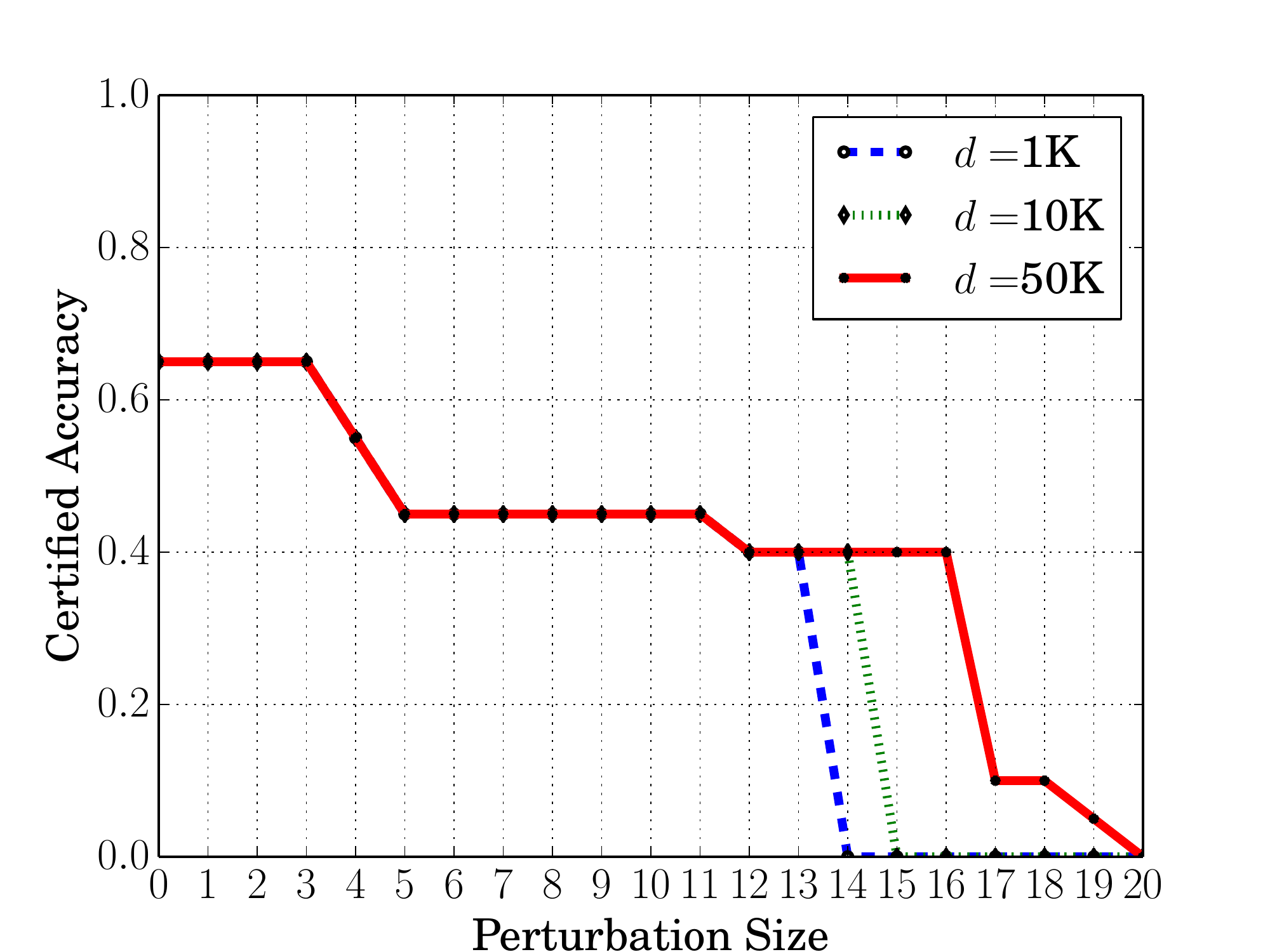} \label{number_GIN}} 
\caption{Impact of $d$ on the certified accuracy of (a) the smoothed GCN and (b) the smoothed GIN. }
\label{number_sample}
\end{figure}

\section{Related Work}
\label{related}

We review studies on certified robustness for classifiers on both non-graph data and graph data.

\subsection{Non-graph Data} 
Various methods have been tried to certify robustness of classifiers. A classifier is certifiably robust if it predicts the same label for data points in a certain region around an input example. Existing certification methods leverage Satisfiability Modulo Theories~\cite{scheibler2015towards,carlini2017provably,ehlers2017formal,katz2017reluplex}, mixed integer-linear programming~\cite{cheng2017maximum,fischetti2018deep,bunel2018unified}, linear programming~\cite{wong2017provable,wong2018scaling}, semidefinite programming~\cite{raghunathan2018certified,raghunathan2018semidefinite}, dual optimization~\cite{dvijotham2018training,dvijotham2018dual}, abstract interpretation~\cite{gehr2018ai2,mirman2018differentiable,singh2018fast}, and layer-wise relaxtion~\cite{weng2018towards,zhang2018efficient}. However, these methods are not scalable to large neural networks and/or are only applicable to  specific neural network architectures.  

Randomized smoothing~\cite{cao2017mitigating,liu2018towards,lecuyer2018certified,li2018second,cohen2019certified,salman2019provably,lee2019stratified,zhai2020macer,levine2020robustness,jia2020certified} was originally developed to defend against adversarial examples. 
It was first proposed as an empirical defense~\cite{cao2017mitigating,liu2018towards}.
For instance,  \citet{cao2017mitigating} proposed to use uniform random noise from a hypercube centered at an input example to smooth a base classifier.  
\citet{lecuyer2018certified}  derived a certified robustness guarantee for randomized smoothing with Gaussian noise or Laplacian noise via differential privacy techniques.  
\citet{li2018second} leveraged information theory to derive a tighter certified robustness guarantee. 
\citet{cohen2019certified} leveraged the Neyman-Pearson Lemma~\cite{neyman1933ix} to obtain a tight certified robustness guarantee under $L_2$-norm for randomized smoothing with  Gaussian noise. 
Specifically, they showed that a smoothed classifier verifiably predicts the same label when the $L_2$-norm of the adversarial perturbation to an input example is less than a threshold. 
\citet{salman2019provably} designed an adaptive attack against smoothed classifiers and used the attack to train classifiers in an adversarial training paradigm to enlarge the certified robuseness under $L_2$-norm.
\citet{zhai2020macer} proposed to explicitly maximize the certified robustness via a new training strategy. 
All these 
methods focused on top-$1$ predictions. 
\citet{jia2020certified} extended \cite{cohen2019certified} to derive the first certification of top-$k$ predictions against adversarial examples.
Compared to other methods for certified robustness, randomized smoothing has two key advantages: 1) it is scalable to large neural networks, and 2) it is applicable to any base classifier.

\subsection{Graph Data} 
Several studies~\cite{dai2018adversarial,zugner2018adversarial,zugner2019adversarial,bojchevski2019adversarial,wang2019attacking,wu2019adversarial,xu2019topology,sun2020adversarial,chang2020restricted} have shown that attackers can fool GNNs via manipulating the node features and/or graph structure.
~\cite{wu2019adversarial,xu2019topology,zhu2019robust,tang2020transferring,entezari2020all,tao2021adversarial} proposed empirical defenses without certified robustness guarantees. 
A few recent works~\cite{Zugner2019Certifiable,bojchevski2020efficient,jin2020certified,bojchevski2019certifiable,zugner2020certifiable}
study certified robustness of GNNs.  \citet{Zugner2019Certifiable}
considered node feature perturbation. In particular, they derived certified robustness for a particular type of GNN called graph convolutional network~\cite{kipf2017semi} against  node feature perturbation. More specifically, they formulated the certified robustness as a linear programming problem, where the objective is to require that a node's prediction is constant within an allowable node feature perturbation. Then, they derived the robustness guarantee based on the dual form of the optimization problem, which is motivated by~\cite{wong2017provable}. 
\cite{bojchevski2019certifiable,zugner2020certifiable,jin2020certified}
considered structural perturbation for a specific GNN. 
Specifically, \cite{bojchevski2019certifiable} required that GNN prediction is a linear function of (personalized) PageRank~\cite{klicpera2018predict} and  \cite{zugner2020certifiable,jin2020certified} could only certify the robustness of GCNs. For example, \cite{jin2020certified} derived the certified robustness of GCN for graph classification via utilizing dualization and convex envelope tools.

Two recent works~\cite{jia2020certifiedcommunity,bojchevski2020efficient} leveraged randomized smoothing to provide a model-agnostic certification 
against structural perturbation. 
Jia et al.~\cite{jia2020certifiedcommunity} certified the robustness of community detection against structural perturbation. 
They essentially model the problem as  binary classification, i.e., whether a set of nodes are in the same community or not, and design a randomized smoothing technique to certify its robustness. 
Bojchevski et al.~\cite{bojchevski2020efficient} generalize the randomized smoothing technique
developed in [29] to the sparse setting and derive certified robustness
for GNNs. They did not formally prove that their multi-class certificate
is tight. However, in the special case of equal flip probabilities,
which is equivalent to our certificate, our tightness proof (Theorem 2)
applies.

\section{Conclusion and Future Work}
\label{conclusion}

In this work, we develop the certified robustness guarantee of GNNs for both node and graph classifications against structural perturbation. Our results are applicable to any GNN classifier. Our certification is based on randomized smoothing for binary data which we develop in this work. Moreover, we prove that our certified robustness guarantee is tight when randomized smoothing is used and no assumptions on the GNNs are made. An interesting future work is to incorporate the information of a given GNN to further improve the certified robustness guarantee. 
\section{Proofs}

\subsection{Proof of Lemma~\ref{lemmabounds}}
\label{proof_lemmabounds}

We first describe the Neyman-Pearson Lemma for binary random variables, which we use to prove Lemma~\ref{lemmabounds}. 

\begin{customLemma}{2}[Neyman-Pearson Lemma for Binary Random Variables]
\label{neyman-pearson-variant}
Let $X$ and $Y$ be two random variables in the discrete space $\{0,1\}^{n}$ with probability distributions $\textrm{Pr}(X)$ and $\textrm{Pr}(Y)$, respectively. Let $h: \{0,1\}^{n} \rightarrow \{0, 1\}$ be a random or deterministic function. 
\begin{itemize}
\item Let $S_1 = \{ \mathbf{z} \in \{0, 1\}^n: \frac{\textrm{Pr}(X=\mathbf{z})}{\textrm{Pr}(Y=\mathbf{z})} > r \}$ 
and 
$S_2 = \{ \mathbf{z} \in \{0, 1\}^n: \frac{\textrm{Pr}(X=\mathbf{z})}{\textrm{Pr}(Y=\mathbf{z})} = r \}$ for some $r>0$.
Assume $S_3 \subseteq S_2$ and $S = S_1 \bigcup S_3$. If $\textrm{Pr}(h(X)=1) \geq \textrm{Pr}(X \in S)$, then $\textrm{Pr}(h(Y)=1) \geq \textrm{Pr}(Y \in S)$.
\item Let $S_1  = \{ \mathbf{z} \in \{0, 1\}^n: \frac{\textrm{Pr}(X=\mathbf{z})}{\textrm{Pr}(Y=\mathbf{z})} < r \}$ 
and  
$S_2 = \{ \mathbf{z} \in \{0, 1\}^n: \frac{\textrm{Pr}(X=\mathbf{z})}{\textrm{Pr}(Y=\mathbf{z})} = r \}$ for some $r>0$.
Assume $S_3 \subseteq S_2$ and $S = S_1 \bigcup S_3$. If $\textrm{Pr}(h(X)=1) \leq \textrm{Pr}(X \in S)$, then $\textrm{Pr}(h(Y)=1) \leq \textrm{Pr}(Y \in S)$.
\end{itemize}
\end{customLemma}
\begin{proof}
The proof can be found in a standard statistics textbook, e.g.,~\cite{lehmann2006testing}. For completeness, we include the proof here.
Without loss of generality, we assume that $h$ is random and denote $h(1|z)$ (resp. $h(0|z)$) as the probability that $h(z)=1$ (resp. $h(z)=0)$).
We denote $S^c$ as the complement of $S$, i.e., $S^c=\{0, 1\}^n\setminus S$.  For any $\mathbf{z} \in S$, we have $\frac{\textrm{Pr}(X=\mathbf{z})}{\textrm{Pr}(Y=\mathbf{z})} \geq r$, and for any $\mathbf{z} \in S^c$,  we have $\frac{\textrm{Pr}(X=\mathbf{z})}{\textrm{Pr}(Y=\mathbf{z})} \leq r$.
We  prove the first part, and the second part can be proved similarly. 
{
\begin{align*}
& \textrm{Pr}(h(Y)=1) - \textrm{Pr}(Y \in S) = \\
& \sum_{\mathbf{z} \in \{0,1\}^{n}} h(1|\mathbf{z}) \textrm{Pr}(Y = \mathbf{z}) - \sum_{\mathbf{z} \in S} \textrm{Pr}(Y = \mathbf{z}) \\
& = \Big( \sum_{\mathbf{z} \in S^{c}} h(1|\mathbf{z}) \textrm{Pr}(Y = \mathbf{z}) + \sum_{\mathbf{z} \in S} h(1|\mathbf{z}) \textrm{Pr}(Y = \mathbf{z}) \Big) \nonumber \\ 
&   \qquad - \Big( \sum_{\mathbf{z} \in S} h(1|\mathbf{z}) \textrm{Pr}(Y = \mathbf{z}) + \sum_{\mathbf{z} \in S} h(0|\mathbf{z}) \textrm{Pr}(Y = \mathbf{z}) \Big) \\
& = \sum_{\mathbf{z} \in S^{c}} h(1|\mathbf{z}) \textrm{Pr}(Y = \mathbf{z}) - \sum_{\mathbf{z} \in S} h(0|\mathbf{z}) \textrm{Pr}(Y = \mathbf{z}) \\
& \geq \frac{1}{r} \Big( \sum_{\mathbf{z} \in S^{c}} h(1|\mathbf{z}) \textrm{Pr}(X = \mathbf{z}) - \sum_{\mathbf{z} \in S} h(0|\mathbf{z}) \textrm{Pr}(X = \mathbf{z}) \Big) \\
& = \frac{1}{r} \Bigg( \Big( \sum_{\mathbf{z} \in S^{c}} h(1|\mathbf{z}) \textrm{Pr}(X = \mathbf{z}) + \sum_{\mathbf{z} \in S} h(1|\mathbf{z}) \textrm{Pr}(X = \mathbf{z}) \Big) \nonumber \\ 
& \qquad - \Big( \sum_{\mathbf{z} \in S} h(1|\mathbf{z}) \textrm{Pr}(X = \mathbf{z}) + \sum_{\mathbf{z} \in S} h(0|\mathbf{z}) \textrm{Pr}(X = \mathbf{z}) \Big) \Bigg) \\ 
& = \frac{1}{r} \Big( \sum_{\mathbf{z} \in \{0,1\}^{n}} h(1|\mathbf{z}) \textrm{Pr}(X = \mathbf{z}) - \sum_{\mathbf{z} \in S} \textrm{Pr}(X = \mathbf{z}) \Big) \\
& = \frac{1}{r} \Big( \textrm{Pr}(h(X)=1) - \textrm{Pr}(X \in S) \Big) \\
& \geq 0.
\end{align*}
}%
\end{proof}

Next, we restate Lemma~\ref{lemmabounds} and show our proof.
\certifyiedradiuslemma*

\begin{proof}
Based on the regions $\mathcal{A}$ and $\mathcal{B}$ defined in Equation~\ref{region_A} and Equation~\ref{region_B}, we have  $\text{Pr}(X \in \mathcal{A}) = \underline{p_A}$ and $\text{Pr}(X \in \mathcal{B}) = \overline{p_B}$. 

Moreover, based on the conditions in Equation~\ref{main_theorem_condition_label}, we have:
$$\text{Pr}(f(X)=c_A)\geq \underline{p_A} = \text{Pr}(X\in\mathcal{A});$$ 
$$\text{Pr}(f(X)=c_B)\leq \overline{p_B} =\text{Pr}(X\in\mathcal{B}).$$ 

Next, we leverage Lemma~\ref{neyman-pearson-variant} to derive the condition for $\text{Pr}(f(Y)=c_A)> \text{Pr}(f(Y)=c_B)$. 
Specifically, we define $h(\mathbf{z})=\mathbb{I}(f(\mathbf{z})=c_A)$. Then, we have: 
$$\text{Pr}(h(X)=1) = \text{Pr}(f(X)=c_A)\geq \text{Pr}(X\in\mathcal{A}).$$ 
Moreover,  we have $\frac{\text{Pr}(X=\mathbf{z})}{\text{Pr}(Y=\mathbf{z})} > r_{a^{\star}}$ for any $\mathbf{z} \in \bigcup_{j=1}^{a^{\star}-1} \mathcal{R}_j$ and 
$\frac{\text{Pr}(X=\mathbf{z})}{\text{Pr}(Y=\mathbf{z})} = r_{a^{\star}}$ for any $\mathbf{z} \in \underline{\mathcal{R}_{a^\star}}$, where $r_{a^{\star}}$ is the probability density ratio in the region $\mathcal{R}_{a^{\star}}$. Therefore, according to the first part of Lemma~\ref{neyman-pearson-variant}, we have $ \text{Pr}(f(Y)=c_A) \geq \text{Pr}(Y\in \mathcal{A})$. 
Similarly, based on the second part of Lemma~\ref{neyman-pearson-variant}, we have
$\text{Pr}(f(Y)=c_B)\leq \text{Pr}(Y\in \mathcal{B})$.
\end{proof}

\subsection{Proof of Theorem~\ref{CertifiedPerturbationSize}}
\label{proof_theorem}

Recall that our goal is to make $\text{Pr}(f(Y) = c_A) > \text{Pr}(f(Y)= c_B)$. 
Based on Lemma~\ref{lemmabounds}, it is sufficient to require 
$
\text{Pr}(Y \in \mathcal{A}) > \text{Pr}(Y \in \mathcal{B}).
$
Therefore, we derive the certified perturbation size $K$ as the maximum $k$ such that the above inequality holds for $\forall ||\delta||_0=R$.

\subsection{Proof of Theorem~\ref{tighnessbound}}
\label{proof_tight}

Our idea is to construct a base classifier $f^*$ consistent with the conditions in Equation~\ref{main_theorem_condition_label}, but the smoothed classifier is not guaranteed to predict $c_A$. 
Let disjoint regions $\mathcal{A}$ and $\mathcal{B}$ be defined as in Equation~\ref{region_A} and Equation~\ref{region_B}.
We denote $\Gamma=\{1,2,\cdots,c\}\setminus\{c_A,c_B\}$. Then, for each label $c_i$ in $\Gamma$, we can find a region $\mathcal{C}_{i}\subseteq\{0,1\}^n \setminus(\mathcal{A}\cup\mathcal{B})$ such that $\mathcal{C}_{i}\cap\mathcal{C}_{j}=\emptyset,\forall i,j\in\Gamma,i\neq j$ and $\text{Pr}(X\in \mathcal{C}_i)\leq \overline{p_B}$. We can construct such regions because $\underline{p_A}+(|\mathcal{C}|-1)\cdot\overline{p_B}\geq 1$. 
Given those regions, we construct the following base classifier:
\begin{align*}
    f^{*}(\mathbf{z})=
    \begin{cases}
    c_A  & \text{if }\mathbf{z}\in\mathcal{A} \\
    c_B & \text{if }\mathbf{z}\in\mathcal{B}\\
    c_i & \text{if }\mathbf{z}\in \mathcal{C}_{i}
    \end{cases}
\end{align*}
By construction, we have $\text{Pr}(f^{*}(X)=c_A)=\underline{p_A}$,  $\text{Pr}(f^{*}(X)=c_B)=\overline{p_B}$, and $\text{Pr}(f^{*}(X)=c_i) \leq \overline{p_B}$ for any $c_i \in \Gamma $, which are consistent with Equation~\ref{main_theorem_condition_label}. 
From our proof of Theorem~\ref{CertifiedPerturbationSize}, we know that when $||\delta||_0 > K$, we have:
\begin{align*}
    \text{Pr}(Y\in\mathcal{A}) \leq \text{Pr}(Y\in\mathcal{B}),
\end{align*}
or equivalently we have:
\begin{align*}
\text{Pr}(f^{*}(Y)=c_A) \leq 
\text{Pr}(f^{*}(Y)=c_B)
\end{align*}
Therefore, we have either $g(\mathbf{s} \oplus \delta)\neq c_A$ or there exist ties when $||\delta||_0 > K$.

\subsection{Computing $\text{Pr}(X \in \mathcal{R}(m))$ and $\text{Pr}(Y \in \mathcal{R}(m))$}
\label{compute_prob}

We first define the following regions:
\begin{align*}
\mathcal{R}(w,v) & = \{\mathbf{z} \in \{0, 1\}^n: ||\mathbf{z}-\mathbf{s}||_0=w \text{ and } ||\mathbf{z}-\mathbf{s} \oplus \delta||_0 =  v \},
\end{align*}
for $w,v \in \{0, 1, \cdots, n \}$. Intuitively, $\mathcal{R}(w,v)$ includes the binary vectors that are $w$ bits different from $\mathbf{s}$ and $v$ bits different from 
$\mathbf{s} \oplus \delta$.  
 Next, we compute the size of the region $\mathcal{R}(w,v)$ when $||\delta||_0=k$. Without loss of generality, we assume $\mathbf{s} = [0, 0, \cdots, 0]$ as a zero vector and $\mathbf{s} \oplus \delta = [1, 1, \cdots, 1, 0, 0, \cdots, 0]$, where the first $k$ entries are 1 and the remaining $n - k$ entries are 0. We construct a binary vector $\mathbf{z}\in \mathcal{R}(w,v)$. Specifically,  suppose we flip $i$ zeros in the last $n-k$ zeros in both $\mathbf{s}$ and $\mathbf{s} \oplus \delta$. Then, we flip $w-i$ of the first $k$ bits of  $\mathbf{s}$ and flip the rest $k-w+i$ bits of the first $k$ bits of  $\mathbf{s} \oplus \delta$. In order to have $\mathbf{z}\in \mathcal{R}(w,v)$, we need  $k-w+i + i = v$, i.e., $i=(w+v-k)/2$. Therefore, we have the size  $|\mathcal{R}(w,v)|$ of the region $\mathcal{R}(w,v)$ as follows:
   {
  \begin{align*}
  |\mathcal{R}(w,v)|  = 
    \begin{cases}
      0,  &\textrm{ if } (w+v-k) \textrm{ mod } 2 \neq 0, \\
      0,  &\textrm{ if } w+v<k, \\
  {n-k \choose \frac{w+v-k}{2}} {k \choose \frac{w-v+k}{2}}, &\textrm{ otherwise }  
  \end{cases}
  \end{align*}
  }%
Moreover, for each $\mathbf{z} \in \mathcal{R}(w,v)$, we have $\text{Pr}(X = \mathbf{z}) = \beta^{n-w} (1-\beta)^{w}$  and $\text{Pr}(Y = \mathbf{z}) = \beta^{n-v} (1-\beta)^v$. Therefore, we have:
  {
  \begin{align*}
  \textrm{Pr}(X \in \mathcal{R}(w,v)) 
  & = \beta^{n-w} (1-\beta)^w \cdot |\mathcal{R}(w,v)|, \\ 
  \textrm{Pr}(Y \in \mathcal{R}(w,v)) 
  & = \beta^{n-v} (1-\beta)^v \cdot |\mathcal{R}(w,v)|. 
  \end{align*}
  }%

Note that $\mathcal{R}(m)=\cup_{v-w=m} \mathcal{R}(w, v)$. 
 Therefore, we have: 
  {
  \begin{align*}
  & \text{Pr}(X \in \mathcal{R}(m)) \\
  =& \text{Pr}(X \in \cup_{v-w=m} \mathcal{R}(w, v)) \\
  =  & \text{Pr}(X \in \cup_{j=\max(0,m)}^{\min(n,n+m)} \mathcal{R}(j-m,j))  \\ 
  =  & \sum_{j=\max(0,m)}^{\min(n,n+m)} \text{Pr}(X \in \mathcal{R}(j-m,j))  \\ 
  =  & \sum_{j=\max(0,m)}^{\min(n,n+m)} \beta^{n-(j-m)} (1-\beta)^{j-m} \cdot |\mathcal{R}(j-m,j)| \label{probX} \\
    =  & \sum_{j=\max(0,m)}^{\min(n,n+m)} \beta^{n-(j-m)} (1-\beta)^{j-m} \cdot t(m,j),
  \end{align*}
  }%
%
where $t(m,j)=|\mathcal{R}(j-m,j)|$.
%

Similarly, we have
\begin{align*}
& \text{Pr}(Y \in \mathcal{R}(m)) = \sum_{j=\max(0,m)}^{\min(n,n+m)} \beta^{n-j} (1-\beta)^{j} \cdot t(m,j).
\end{align*}

\noindent \textbf{Acknowledgements.}
We thank the anonymous reviewers for their constructive
comments. This work is supported by the National Science
Foundation under Grants No. 1937786 and 1937787 and the Army Research Office under Grant No. W911NF2110182. Any opinions, findings and conclusions or
recommendations expressed in this material are those of the
author(s) and do not necessarily reflect the views of the funding
agencies.

\bibliographystyle{ACM-Reference-Format}
\bibliography{refs}

\end{document}